\newcounter{maincounter} 
\newtheorem{theorem}{Theorem}
\newtheorem{definition}[maincounter]{Definition}
\newtheorem{lemma}[maincounter]{Lemma} 
\newtheorem{corollary}[maincounter]{Corollary}
\newtheorem{fact}{Fact}
\newcommand{\ubar}[1]{\mkern3.5mu\underline{\mkern-3.5mu
#1\mkern-3.5mu}\mkern3.5mu }
\renewcommand{\O}{\ensuremath{\mathcal{O}}}
\newcommand{\R}{\ensuremath{\operatorname{R}}}
\newcommand{\Rmax}{\ensuremath{\bar{\R}}}
\newcommand{\Rmin}{\ensuremath{\ubar{\R}}}
\newcommand{\amax}{\ensuremath{\bar{\alpha}}}
\newcommand{\amin}{\ensuremath{\ubar{\alpha}}}
\newcommand{\noise}{{\operatorname{N}}} 
\newcommand{\nmax}{\bar{{\operatorname{N}}}}
\newcommand{\bmax}{\bar{\beta}}
\newcommand{\dist}{{\operatorname{dist}}}
\newcommand{\area}{{\operatorname{Area}}}
\newcommand{\power}{{\operatorname{P}}}
\begin{document}

\title{Arbitrary Transmission Power in the SINR Model: \\ Local
  Broadcasting, Coloring and MIS}

\author{
\vspace{2mm}
  Fabian Fuchs and  Dorothea Wagner\\
  {\normalsize Karlsruhe Institute for Technology } \\
  {\normalsize   Karlsruhe, Germany } \\
  {\normalsize   \{fabian.fuchs, dorothea.wagner\}@kit.edu } 
}

\maketitle

\noindent\makebox[\textwidth][c]{%
\begin{minipage}[t]{0.8\textwidth}
\begin{abstract}
  In the light of energy conservation and the expansion of existing
  networks, wireless networks face the challenge of nodes with
  heterogeneous transmission power.  However, for more realistic
  models of wireless communication only few algorithmic results are
  known.  In this paper we consider nodes with arbitrary, possibly
  variable, transmission power in the so-called physical or SINR
  model.  Our first result is a bound on the probabilistic
  interference from all simultaneously transmitting nodes on
  receivers. This result implies that current local broadcasting
  algorithms can be generalized to the case of non-uniform
  transmission power with minor changes. The algorithms run in
  $\O(\Gamma^{2} \Delta \log n)$ time slots if the maximal degree
  $\Delta$ is known, and $\O((\Delta + \log n)\Gamma^{2} \log n)$
  otherwise, where $\Gamma$ is the ratio between the maximal and the
  minimal transmission range.  The broad applicability of our result
  on bounding the interference is further highlighted, by generalizing
  a distributed coloring algorithm to this setting.
\end{abstract}
\end{minipage} 
}

\newpage

\section{Introduction}
\label{sec:introduction}

One of the most fundamental problems in wireless ad hoc networks is to
enable efficient communication between neighboring nodes.  This
problem recently received increasing attention among the distributed
algorithm community, as more refined models of wireless communication
became established in algorithms research.  Among these models, the
so-called physical or \emph{signal-to-interference-and-noise} (SINR)
model is most prominent and promising, due to its common use in the
engineering literature.  However, so far most algorithmic work in the
SINR model is restricted to the case of uniform transmission power. In
this case, \emph{local broadcasting}
\cite{hm-ttblb-12,yhwl-aodaa-12,ywhl-dlbap-11,gmw-lbpim-08} provides
initial communication by enabling each node to transmit one message
such that all intended receivers (i.e., neighbors) are able to decode
the message.

In this work we consider the problem of local broadcasting in the SINR
model under arbitrary transmission power assignment, i.e., each node
has its individual, possibly variable, transmission power. We are the
first to consider this setting from an algorithmic perspective.  While
some distributed node coloring algorithms do consider the transmission
power to be variable \cite{dt-dncsi-10,ywhl-ddcpm-11}, they still
increase the transmission power synchronously and thus effectively
operate on an uniform power network.  The sole line of research that
leverages non-uniform transmission power is on link scheduling and
capacity maximization \cite{hm-wcc-12,kv-dcr-10}. However, there, each
node is usually considered to be either transmitter or receiver. If a
node has multiple roles it might have to adapt its transmission power
frequently.  On the other hand, the effects of heterogeneous
transmission power are considered in simulation-based studies for
example in \cite{ftbus-amac-02,pkd-macn-01}, %
while the case of unidirectional communication links, which are a
result of heterogeneous transmission powers, is studied even more
frequently \cite{6266131,wang2008mac}.

We assume the harsh environment of an wireless ad hoc network just
after deployment. In particular, we consider multi-hop networks, where
the nodes do initially not have any information about whether other
nodes are awake, have already started the algorithm or in which phase
of the algorithm they are. The only knowledge they may have is an
upper bound on the number of neighbors, and a rough bound on the total
number of nodes in the network.  Note that our model does not assume a
\emph{collision detection} mechanism.  Additionally to this harsh
model, we also considered some recent ideas regarding practical
matters of algorithms for wireless networks by Kuhn \emph{et. al.}
\cite{dgkn-bahs-13}. They promoted the use of lower and upper bounds
for important network parameters such as $\alpha, \beta$ and $\noise$
(cf. Section~\ref{sec:preliminaries}). This is an important step
towards practicability of the algorithms as upper and lower bounds to
these values are well-represented in the literature, however, exact values
vary depending on the network environment.

\subsection{Contributions}
\label{sec:contrib}

In this work we are the first to consider arbitrary transmission
powers in the SINR model, and thus networks with unidirectional links
for the problems of local broadcasting, distributed node coloring and
MIS. However, our first contribution is of more general nature and
provides an abstract method for bounding the interference in these
networks. We prove that transmissions are feasible based on the sum of
local transmission probabilities. This result is widely applicable, as
verifying that the sum of local transmission probabilities is bounded
as required, is relatively simple. 

Our second result transfers algorithms for local broadcasting
presented in \cite{hm-ttblb-12,gmw-lbpim-08} to the case of arbitrary
transmission power assignment. We achieve local broadcasting in
$\O(\Gamma^{2} \Delta \log n)$ time slots if the maximal degree
$\Delta$ is known and $\O((\Delta + \log n)\Gamma^{2} \log n)$
otherwise, where $\Gamma$ is the ratio between the maximal and the
minimal transmission range. Note that these bounds match those for the
uniform case if the algorithms are run on such networks.
Additionally we discuss the case of variable transmission
power in Section~\ref{sec:vari-transm-power}, which achieves similar
bounds, but allows nodes to change the transmission power in each time
slot instead of fixing it for each round of local broadcasting.

Finally we give an algorithm for distributed node coloring in these
harsh environments. The algorithm is in based on an algorithm by
Moscibroda and Wattenhofer \cite{mw-curn-08}, which was adapted to the
uniform SINR model by Derbel and Talbi \cite{dt-dncsi-10}. Note
however, that fundamental changes to the algorithm itself are required
due to the increased complexity of the network structure, such as
unidirectional communication links. We introduce a new network
parameter $\ell$, that measures the length of the longest simple
unidirectional chain in the partially directed network and prove that
our distributed node coloring algorithm colors the network with
$\O(\Gamma^2\Delta)$ colors in $\O((\Delta+\ell) \Gamma^{6} \Delta
\log n)$ time slots. By simplifying the algorithm we obtain an
algorithm that computes an MIS in $\O(\ell \Gamma^{4} \log n)$ time
slots.
Note that all our algorithms are fully operational in the unstructured
radio network, especially under asynchronous node wake-up and
sleep. 

\subsection{Related Work}
\label{sec:related-work}

The study of local broadcasting, and interference in general, has only
recently emerged. Especially in classical distributed message
passing models such as $\mathcal{LOCAL}$ or $\mathcal{CONGEST}$
\cite{p-dcals-00}, the transmission of a message to neighbors is
guaranteed. However, this is not the case for wireless networks. Hence
interference in general and local broadcasting in particular must be
considered in the more realistic SINR model of
interference. Goussevskaia \emph{et. al.}  \cite{gmw-lbpim-08} were
the first to present local broadcasting algorithms in the SINR
model. Their first algorithm assumes an upper bound $\Delta$ on the
number of neighbors to be known by the nodes and solves local
broadcasting with high probability in $\O(\Delta \log n)$ time, while
the second algorithm does not assume this knowledge and requires
$\O(\Delta \log^3n)$ time.  The second algorithm has subsequently been
improved by Yu \emph{et. al.} to run in $\O(\Delta \log^2n)$
\cite{ywhl-dlbap-11}, and again to $\O(\Delta\log n + \log^2 n)$
\cite{yhwl-aodaa-12}.  This bound has been matched by Halld\'{o}rsson
and Mitra in \cite{hm-ttblb-12} using a more robust algorithm, along
with an algorithm that leverages carrier sensing to achieve a time
complexity of $\O(\Delta + \log n)$. 

Research on distributed node coloring dates back to the first days of
distributed computing nearly 30 years ago.  Due to the wide variety of
results in this area, we refer to the monograph recently published by
Barenboim and Elkin \cite{be-dcg-13} for results in the
$\mathcal{LOCAL}$ model.  Note that the considered message passing
model abstracts away characteristics of a newly deployed wireless ad
hoc network: Global interference, asynchronous node wake-up and sleep,
and unidirectional communication links are not considered. Thus these
algorithms cannot directly be used in the harsh model considered in
this work.

An algorithm that colors the network with $\O(\Delta)$ colors in
$\O(\Delta \log n)$ time was presented by Moscibroda and Wattenhofer
in \cite{mw-curn-05}. However, they assume a graph-based interference
model. The algorithm has subsequently been improved in
\cite{mw-curn-08} and \cite{sw-cuwm-09} and transfered to the SINR
model by Derbel and Talbi \cite{dt-dncsi-10} with the same bound on
colors and runtime as the original algorithm. Yu \emph{et. al.}
consider the problem of coloring with only $\Delta +1$ colors in
\cite{ywhl-ddcpm-11} and present algorithms that run in $\O(\Delta
\log^2 n)$ time slots or $\O(\Delta \log n + \log^2 n)$ if the nodes
transmission power can be tuned by a constant factor.

\section{Preliminaries}
\label{sec:preliminaries}

We consider a wireless network consisting of $n$ nodes, that are
placed arbitrarily on the Euclidean plane. We assume that all nodes in
the network know their ID and an upper bound $\tilde{n}$ on $n$, with
$\tilde{n} \leq n^c$ for some constant $c \geq 1$. As the upper bound
influences our results only by a constant factor we usually write $n$
even though only $\tilde{n}$ may be known by the nodes. Also, we
assume that nodes know lower and upper bounds on the transmission
power or the transmission ranges. This assumption is realistic, as
lower bounds for reasonable minimal transmission ranges can be
computed while upper bounds (for specified frequencies) are often
regulated by public authorities.

In the geometric SINR model a transmission from node $v$ to node $w$
is successful iff the SINR condition holds:
\begin{equation}
  \frac{ \frac{ \power_v }{ \dist(v,w)^\alpha } } { \sum_{ u \in
      \mathcal{I} } \frac{ \power_u }{ \dist(u,w)^\alpha } + \noise} \geq \beta
\end{equation}
where $\power_v$ ($\power_u$) denotes the transmission power of node
$v$ ($u$), $\alpha$ is the attenuation coefficient, which depends on
the environment and characterizes how fast the signal fades. The
SINR-threshold $\beta \geq 1$ is a hardware-defined constant, $\noise$ is
the environmental noise and $\mathcal{I}$ is the set of nodes transmitting
simultaneously with $v$.
As introduced in \cite{dgkn-bahs-13} and motivated by the hardness of
determining exact network parameters we restrict our nodes knowledge
to upper and lower bounds of the values $\alpha$, $\beta$ and $\noise$
and denote them by e.g. $\amin$ and $\amax$ for the minimal and
maximal values.

Based on the SINR constraints, we define the \emph{maximum
  transmission range} of a node $v$ to be $\Rmax_v =
(\frac{\power_v}{\nmax \bmax})^{1/\amax}$.  Note that this is maximal
under the restriction that this range can be reached regardless of the
actual network parameters $\alpha$, $\beta$, $\noise$.  The global
maximum transmission range in the network is denoted by $\Rmax$, the
minimum range by $\Rmin$ and the ratio between $\Rmax$ and $\Rmin$ by
$\Gamma = \frac{\Rmax}{\Rmin}$.  Due to the SINR constraints, a node
$v$ cannot reach another node $w$ which is located at the maximum
transmission range of $v$, as soon $v$ transmits simultaneously with
any other node in the network.  As having only one simultaneous
transmission in the network is not desired, we use a parameter $\delta
> 1$ to determine the distance up to which the nodes messages should
be received.  We call this distance the \emph{broadcasting range}
$\R_v = ( \frac{ \power_v }{ \delta \nmax \bmax })^{1/\amin}$ and the
region within this range from $v$ the broadcasting region $B_v$.  We
denote the maximum number of nodes within the transmission range
$\Rmax_v$ of any $v$ as $\Delta$. This is an upper bound on the number
of nodes reachable from $v$, since the broadcasting range $\R_v$ is
fully contained in the transmission range.  Note that $\Delta$ is
known by the nodes only if stated with the corresponding
algorithms. We define the \emph{proximity region} around $v$ as the
area closer than $3\Rmax$ to $v$.  Note that even though we use time
slots in our analysis, we do not require a global clock or
synchronized time slots in our algorithm. Decent local clocks are
sufficient, while time slots are only required in the analysis.

\textbf{Roadmap:} In the following section we bound the
probabilistic interference of nodes outside the proximity region based
on the sum of transmission probabilities from within each transmission
region. In Section~\ref{sec:local-broadcasting} we apply this result
to previous results on local broadcasting and thereby transfer current
algorithms to the more general model. 
The applicability of our results is highlighted in
Section~\ref{sec:node-coloring-mis}, as we consider the problem of
distributed node coloring and generalize a well-known algorithm from
the case of uniform transmission powers. We conclude this paper in
Section~\ref{sec:conclussion} with some final remarks.


Note that even though we use time slots in our analysis, we do not
require a global clock or synchronized time slots in our
algorithm. Decent local clocks are sufficient, while time slots are
only required in the analysis. 

\noindent
\textbf{Roadmap:} In the following section we will bound the
probabilistic interference of nodes outside the proximity region based
on a bound on the sum of transmission probabilities from within each
transmission region. In Section~\ref{sec:local-broadcasting} we apply
this result to previous results on local broadcasting and thereby
transfer current algorithms to the more general model. In
Section~\ref{sec:node-coloring-mis} we consider distributed node
coloring and describe an algorithm that is capable of computing an
$\O(\Gamma^2\Delta)$ coloring, or after a simplification an MIS. We
conclude this paper in Section~\ref{sec:conclussion} with some final
remarks.

\section{Bounding the Interference}
\label{sec:bound-interf}

In contrast to other models for interference in wireless communication
such as the protocol model, the SINR model captures the global aspect
of interference and reflects that even interference from far-away
nodes can add up to a level that prevents the reception of
transmissions from relatively close nodes.  To ensure that a given
transmission can be decoded by all nodes within the broadcasting
range, one usually proves that reception within a certain time
interval is successful \emph{with high probability} (w.h.p.---with
probability at least $1-\frac{1}{n^c}$ for a constant $c > 1$). Such a
proof can be split in two parts
\begin{enumerate}
\item The probability that a node transmits within a proximity region
  around a sender is constant
\item Let $P_{\text{2high}}(v)$ be the event that the interference
  from all nodes outside of the proximity region of $v$ on nodes in
  the broadcasting region of $v$ is too high. Show that
  $P_{\text{2high}}(v)$ has constant probability.
\end{enumerate}
We shall follow this scheme by considering the transmission of an
arbitrary node and proving that both conditions hold with constant
probability in each time slot, and hence a local broadcast is
successful with high probability.

In order to make the result general and applicable to many different
settings, we make only one very general assumption. Namely we assume
the sum of transmission probabilities from within a broadcasting
region to be bounded by a constant. This is very common and allows us
to apply the analysis from this section in the following Sections
\ref{sec:local-broadcasting} and \ref{sec:node-coloring-mis} to
generalize algorithms designed for the uniform transmission power case
to the more general case considered in this paper\footnote{We can directly
  apply our results to many algorithmic results in the SINR model,
  however the algorithms themselves often rely on bidirectional communication
  links.}.
\begin{definition}
  \label{def:1}
  Given a network of $n$ nodes with at most $\Delta$ nodes in each
  transmission region. Let $\gamma$ be the upper bound on the sum of
  transmission probabilities from within one transmission region.
\end{definition}
Let the upper bound on the sum of transmission probabilities from
within each transmission region be
\begin{align}
  \label{eq:1}
  \gamma := \frac{(\delta-1)}{120 \bmax \Gamma^{2} \sum_{i=1}^n
    \frac{1}{i^{\amax-1}}}.
\end{align}
Note that this bound can be realized, for example by requiring nodes
to transmit with probability $\gamma/\Delta$. Another option is the
so-called slow-start technique,
cf. Section~\ref{sec:lb-without-knowl-delta}.  The constant is of the
stated form, mainly to bound the interference from all other nodes in
the network in the proof of
Theorem~\ref{thm:sum-of-rings-bounded-interference}. It holds that
$\gamma \leq 1$\footnote{This may not be true for a large
  $\delta$. Thus for $\delta>1$ we use $\gamma := \frac{1}{120 \bmax
    \Gamma^{2} \sum_{i=1}^n \frac{1}{i^{\amax-1}}}$.}.  Let us now
prove a bound on the probability that a close-by node transmits, which
is also required for the main theorem of this section.

\begin{lemma}
  \label{lem:1}
  Given an arbitrary node $v$. The probability that no node in the
  proximity region transmits in a given time slot is at least $1/4$.
\end{lemma}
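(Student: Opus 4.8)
The plan is to express the target probability as a product over the transmission probabilities of the nodes lying in the proximity region $B(v,3\Rmax)$ and to bound this product from below by controlling the sum of those probabilities. Writing $p_u$ for the probability that node $u$ transmits in the given slot, and using that nodes decide to transmit independently, the probability that no node in the proximity region transmits equals $\prod_{u \in B(v,3\Rmax)}(1-p_u)$. By the Weierstrass product inequality this is at least $1 - \sum_{u \in B(v,3\Rmax)} p_u$, so it suffices to show that $\sum_{u \in B(v,3\Rmax)} p_u \le 3/4$.

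To bound this sum I would exploit Definition~\ref{def:1} through a covering argument. The difficulty is that the transmission regions have heterogeneous radii $\Rmax_u \ge \Rmin$, so the per-region bound $\gamma$ cannot be applied directly to the whole proximity region. Instead I cover $B(v,3\Rmax)$ by axis-aligned square cells of side $\Rmin/\sqrt2$, hence of diameter $\Rmin$. The key observation is that any node $u$ contained in such a cell has the entire cell within distance $\Rmin \le \Rmax_u$ of it, so the cell lies inside the transmission region of $u$; Definition~\ref{def:1} then bounds the sum of transmission probabilities inside each nonempty cell by $\gamma$. Summing over all cells — which can only overcount nodes — yields $\sum_{u \in B(v,3\Rmax)} p_u \le K\gamma$, where $K$ is the number of cells.

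It remains to check the constants. Since $B(v,3\Rmax)$ fits inside a $6\Rmax \times 6\Rmax$ square and each cell has side $\Rmin/\sqrt2$, roughly $72\,\Gamma^2$ cells suffice, i.e. $K = \O(\Gamma^2)$. From the definition of $\gamma$, using $\bmax \ge 1$, $\sum_{i=1}^n i^{-(\amax-1)} \ge 1$, and the footnote convention that caps the numerator at $1$, we obtain $\gamma \le \frac{1}{120\,\Gamma^2}$. Hence $K\gamma \le \frac{72\,\Gamma^2}{120\,\Gamma^2} = \frac{3}{5} \le \frac{3}{4}$, so $\sum_{u \in B(v,3\Rmax)} p_u \le 3/4$ and the probability that the proximity region is silent is at least $1 - 3/4 = 1/4$.

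The main obstacle I expect is the covering step rather than the probabilistic estimate. One must ensure that each cell sits inside the transmission region of \emph{every} node it contains despite the non-uniform ranges — this is exactly where the global minimum range $\Rmin$ sets the cell scale, and where the $\Gamma^2$ factor built into $\gamma$ cancels the $\O(\Gamma^2)$ cell count — while pinning down the covering constant precisely enough that $K\gamma$ lands below $3/4$. The product inequality, by comparison, is routine, and the factor $120$ in $\gamma$ is chosen with enough slack to absorb the covering constant.
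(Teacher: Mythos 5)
Your proof is correct, and its skeleton is the same as the paper's: express the silence probability as $\prod_{u}(1-p_u)$, reduce it to a bound on $\sum_u p_u$ over the proximity region, bound that sum by (number of covering regions)$\,\cdot\,\gamma$ using Definition~\ref{def:1}, and finish using the constant $120$ built into $\gamma$. Where you differ is in the two technical ingredients, and the trade-off is worth noting. For the product-to-sum step the paper uses Fact~\ref{fact:1over4probabilitiesless1over2}, $\prod(1-p_u) \geq (1/4)^{\sum p_u}$ (applicable since each $p_u \leq \gamma < 1/2$), which only requires $\sum_u p_u \leq 1$; you use the Weierstrass inequality $\prod(1-p_u) \geq 1 - \sum p_u$, which is more elementary but needs the stronger bound $\sum_u p_u \leq 3/4$. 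For the geometric step the paper packs the proximity region by independent nodes (disjoint disks of radius $\Rmin/2$ inside a disk of radius $3\Rmax + \Rmin/2$), giving the constant $49\Gamma^2$ and hence comfortable slack $49/120$; your grid of cells of side $\Rmin/\sqrt{2}$ gives, once ceilings are accounted for, not $72\Gamma^2$ but $\lceil 6\sqrt{2}\,\Gamma\rceil^2 \leq (6\sqrt{2}\,\Gamma+1)^2 \leq (73+12\sqrt{2})\Gamma^2 \approx 90\Gamma^2$ cells, so that $K\gamma \leq (73+12\sqrt{2})/120 \approx 0.7498$. This is still below $3/4$, so your argument goes through, but with almost no slack: the combination ``linear lower bound $+$ grid covering'' only barely survives, whereas the paper's ``exponential lower bound $+$ disk packing'' is robust to sloppy constants. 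If you keep your route, the cell count is the step to state carefully rather than as ``roughly $72\Gamma^2$''.
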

\begin{proof}
  Let $3\Rmax(v)$ denote the set of nodes that are closer to $v$ than
  $3\Rmax$ in this argument. This is the set of nodes in the proximity
  region of $v$. The probability that a node in $3\Rmax(v)$ transmits
  in a single time slots is
\begin{align*}
  \label{eq:2}
  P_\text{none}^{3\Rmax(v)} \geq \prod_{u \in 3\Rmax(v)} \left( 1-p_u
  \right) \geq \left( \frac{1}{4} \right) ^{\sum_{u \in
      3\Rmax(v)} p_u} \geq \left( \frac{1}{4}
  \right)^{49\Gamma^2\cdot\gamma} \geq \left(
    \frac{1}{4} \right),
\end{align*}
where the second inequality holds due to
Fact~\ref{fact:1over4probabilitiesless1over2} from
Appendix~\ref{app:facts}, the third inequality due to a simple
geometric argument about the number of independent nodes within
distance $3\Rmax$ of $v$ and the bound on the sum of transmission
probabilities from within each transmission region. The last
inequality holds since $49\Gamma^2\cdot\gamma < 1$.
\end{proof}

Let us now consider nodes that are not in the proximity region of the
transmitting node. In order to bound the interference originating from
these nodes, we use rings around the transmitting node and bound the
probabilistic interference from within each ring. Note that although
our definition of the proximity region and rings differ, similar
arguments are made, for example, in \cite{hm-ttblb-12,gmw-lbpim-08}.
\begin{definition}
  For a node $v$, the \emph{ring} $C_i^v$, $i \geq 0$, is defined as
  the set of nodes with distance at least $(i+1) \cdot \Rmax$ and at
  most $(i+2) \cdot \Rmax$. For a ring $C_i^v$, the \emph{extended
    ring} $C_{i+}^v$ is defined as the set of nodes with distance at
  least $i \cdot \Rmax$ and at most $(i+3) \cdot \Rmax$.
\end{definition}
Note that for a ring $C_i^v$, the extended ring $C_{i+}^v$ is defined
such that the nodes in the transmission region of an arbitrary node $w
\in C_i^v$ are contained in $C_{i+}^v$.  If it is clear to which node
$v$ the rings refer, we write $C_i$ and $C_{i+}$ for brevity.

\begin{theorem}
  \label{thm:sum-of-rings-bounded-interference}
  Let the sum of transmission probabilities from each transmission
  region be upper bounded by $\gamma$. Given a node $v$, the
  probabilistic interference from nodes outside the proximity region
  of $v$ is upper bounded by $(\delta-1)N$.
\end{theorem}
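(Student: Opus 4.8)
The plan is to compute the expected (``probabilistic'') interference at $v$ as $\sum_{u}\frac{\power_u}{\dist(u,v)^{\alpha}}\,p_u$, where $p_u$ is the transmission probability of $u$ and the sum ranges over all nodes $u$ outside the proximity region, i.e.\ with $\dist(u,v)\ge 3\Rmax$, and to show that this is at most $(\delta-1)\noise$. First I would partition these nodes into the rings $C_i$ for $i\ge 2$, which together cover everything beyond distance $3\Rmax$, and bound the interference ring by ring:
\begin{align*}
  \sum_{u:\,\dist(u,v)\ge 3\Rmax}\frac{\power_u}{\dist(u,v)^{\alpha}}\,p_u
  \;\le\; \sum_{i\ge 2}\Big(\max_{u\in C_i}\frac{\power_u}{\dist(u,v)^{\alpha}}\Big)\sum_{u\in C_i}p_u .
\end{align*}

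For each ring I need two estimates. The per-node estimate uses the definition $\power_u=\Rmax_u^{\amax}\nmax\bmax\le\Rmax^{\amax}\nmax\bmax$ together with $\dist(u,v)\ge (i+1)\Rmax$ for $u\in C_i$, giving $\frac{\power_u}{\dist(u,v)^{\alpha}}\le \nmax\bmax\,\Rmax^{\amax-\alpha}/(i+1)^{\alpha}$. The probability estimate is the analogue of the count $49\Gamma^2$ appearing in Lemma~\ref{lem:1}: I would cover the annulus $C_i$ by disks of radius $\Rmin/2$, each of which is contained in the transmission region of any node it contains (since two points of such a disk are at distance at most $\Rmin\le\Rmax_u$), so that each disk carries transmission probability at most $\gamma$ by the defining assumption. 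Since $C_i$ has area $\Theta(i\,\Rmax^2)$ and each covering disk has area $\Theta(\Rmin^2)=\Theta(\Rmax^2/\Gamma^2)$, only $\O(i\Gamma^2)$ disks are needed, whence $\sum_{u\in C_i}p_u=\O(i\Gamma^2)\,\gamma$.

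Multiplying the two bounds and summing gives
\begin{align*}
  \sum_{i\ge 2}\O(i\Gamma^2)\gamma\cdot\frac{\nmax\bmax\,\Rmax^{\amax-\alpha}}{(i+1)^{\alpha}}
  \;=\;\O(\Gamma^2)\,\gamma\,\nmax\bmax\,\Rmax^{\amax-\alpha}\sum_{i\ge 2}\frac{i}{(i+1)^{\alpha}} ,
\end{align*}
and since $\frac{i}{(i+1)^{\alpha}}\le (i+1)^{-(\alpha-1)}$ the tail is a convergent series dominated by $\sum_{i=1}^{n} i^{-(\amax-1)}$ once the dependence on $\alpha\in[\amin,\amax]$ is taken in the worst case. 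The value of $\gamma$ in \eqref{eq:1} is reverse-engineered precisely so that substituting it cancels the $\Gamma^2$, the $\bmax$ and the harmonic-type sum, leaving the constant $120$ to absorb the hidden constants and reduce the right-hand side to $(\delta-1)\noise$.

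The step I expect to be the real obstacle is the probability estimate per ring under non-uniform power. In the uniform case a ring at distance $\Theta(i\Rmax)$ simply contains $\O(i)$ transmission regions, but here arbitrarily many low-power nodes can be packed into a single disk of radius $\Rmax$, so I must cover at the smaller scale $\Rmin$ to guarantee that every covering disk still lies inside one transmission region; this is exactly what forces the $\Gamma^2$ blow-up and is the reason $\Gamma^2$ appears in the denominator of $\gamma$. A secondary technical point is keeping the estimates uniform over the whole interval $[\amin,\amax]$ of possible attenuation exponents, so that the same $\gamma$ works for every admissible $\alpha$.
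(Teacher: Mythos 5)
Your proposal follows essentially the same route as the paper's proof: partition the exterior of the proximity region into the rings $C_i$, bound the transmission-probability mass of each ring by a packing/covering argument at scale $\Rmin$ (giving $\O(i\Gamma^2)\gamma$ per ring, which is exactly where the $\Gamma^2$ in the definition of $\gamma$ gets consumed), bound each node's contribution through $\power_u \le \Rmax^{\amax}\nmax\bmax$ and a per-ring distance lower bound, and let the constant in Equation~\ref{eq:1} cancel everything down to $(\delta-1)\nmax/2$ --- the factor-$2$ slack being what the proof of Theorem~\ref{thm:communication-ok-arb} later feeds into Markov's inequality. The paper phrases the per-ring counting as a bound of $(6i+9)\Rmax^2/\Rmin^2$ on independent nodes in the extended ring $C_{i+}$; your covering by disks of radius $\Rmin/2$ is the same idea in dual form and buys nothing different.

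Two discrepancies are worth fixing. First, you evaluate the interference at $v$ itself, but the quantity the theorem is used for (see item 2 of the proof scheme at the start of Section~\ref{sec:bound-interf}, and the proof of Theorem~\ref{thm:communication-ok-arb}) is the interference at the \emph{receivers}, i.e., at every node $u$ in the broadcasting region $B_v$. The paper therefore uses the distance bound $\dist(w,u)\ge i\Rmax$ for $w\in C_i$ --- one $\Rmax$ is lost to the triangle inequality because $u$ can be up to $\Rmax$ away from $v$ --- where you use $(i+1)\Rmax$; your series survives this shift with only constant changes, but as written you have proved the bound at the wrong point. Second, your treatment of the exponent is not a correct ``worst case'' argument: for $i\ge 1$ the terms $i^{-(\alpha-1)}$ \emph{increase} as $\alpha$ decreases, so taking the worst admissible $\alpha$ for far-away interference would force $\amin$, not $\amax$, and moreover your uncancelled prefactor $\Rmax^{\amax-\alpha}$ exceeds $1$ whenever $\alpha<\amax$ and $\Rmax>1$, so the final inequality against the $\gamma$ of Equation~\ref{eq:1} (whose denominator contains $\sum_{i=1}^n i^{-(\amax-1)}$) does not close. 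The paper avoids confronting this by evaluating the path loss with exponent $\amax$ throughout, i.e., effectively treating $\alpha=\amax$; if you adopt that same convention your argument goes through, but the claim that the general $\alpha\in[\amin,\amax]$ case is ``dominated'' by the $\amax$ sum is backwards.
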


\begin{proof}
  Let us first bound the interference from a single ring $C_i$. By a
  simple geometric argument it holds that the maximal number of
  independent nodes in the extended ring $C_{i+}$ is at most $(6i + 9)
  \Rmax^2/\Rmin^2$. By combining this number with the sum of
  transmission probabilities from within each broadcasting region, we
  can bound the interference from the nodes in $C_i$. As each node in
  the ring $C_i$ has distance greater than $i\cdot\Rmax$ from any node
  in $B_v$, it follows that the probabilistic interference on any node
  $u \in B_v$ is at most
  \begin{align*}
    \Psi_{C_i} &\leq \sum_{w \in C_{i+}} \frac{p_w \power_w }{ (i
      \Rmax)^{\amax} } 
    & \leq \frac{4 (6i + 9) \Rmax^2 \gamma \bmax \nmax}{ \Rmin^2 i^{\amax} }
    \cdot \left( \frac{ \Rmax }{ \Rmax
      } \right)^{\amax} 
    &\leq \frac{60 \gamma \bmax \nmax}{i^{\amax -1}} \cdot \left( \frac{ \Rmin
      }{ \Rmax } \right)^{2}.
  \end{align*}
  Summing over all rings it follows
  \begin{align*}
    \Psi_{w \not\in 3\Rmax(v)} & \leq \sum_{i=2}^\infty \Psi_{C_i} \leq
    60 \gamma \bmax \nmax \Gamma^2 \sum_{i = 1}^{n}
    \frac{1}{i^{\amax-1}} \leq \frac{(\delta-1)\nmax}{2},
  \end{align*}
  where the second inequality holds by inserting the bound on
  $\Psi_{C_i}$ and the fact that there are at most $n$ non-empty
  rings. The last inequality follows from the upper bound on $\gamma$,
  stated in Equation~\ref{eq:1}.
\end{proof}

\section{Local Broadcasting}
\label{sec:local-broadcasting}

In the previous section we have shown how to bound the probabilistic
interference from nodes outside of the proximity region based on an
upper bound on the sum of transmission probabilities from within each
transmission region. Such bounds are known for many algorithms in the
case of uniform transmission power, and hence we can plug our results
into a large body of related work, and transfer results with minimal
additional efforts to the case of arbitrary but fixed transmission
power.  In the following section we briefly state our results
regarding local broadcasting along with proof sketches as required. 
In Section \ref{sec:vari-transm-power} we discuss our results regarding
variable transmission power.

\subsection{Arbitrary but Fixed Transmission Power}
\label{sec:local-broadc-with-arb-tx-power}

The current results on local broadcasting with the knowledge of
$\Delta$ are based on transmitting with a fixed probability in the
order of $1/\Delta$ for a sufficient number of time slots in
$\O(\Delta \log n)$, while results that do not assume the maximal
degree $\Delta$ to be known are usually based on a so-called
slow-start mechanism. 

\subsubsection{With knowledge of the maximal degree $\Delta$}
\label{sec:general-case}

Let us first consider the case, in which each node knowns the maximal
degree $\Delta$. Using the result on local broadcasting by
Goussevskaia, Moscibroda and Wattenhofer \cite{gmw-lbpim-08}, it is
easy to show that local broadcasting can be realized in
$\O(\Gamma^{2} \Delta \log n)$ time slots by simply adapting
the transmission probability to our requirements. 

\begin{theorem}
\label{thm:communication-ok-arb} 
Let the transmission probability of each node be $p = \gamma/\Delta$,
and $c>1$ an arbitrary constant.  A node $v$ that transmits with
probability $p$ for $8c/p \log n = \O(\Gamma^{2} \Delta \log n)$ time
slots successfully transmits to its neighbors whp.
\end{theorem}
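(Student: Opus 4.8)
The plan is to show that in each time slot, a fixed target node $w$ within the broadcasting region of $v$ successfully receives $v$'s transmission with at least constant probability, and then amplify this to "with high probability" by repeating over $\Theta(\log n / p)$ slots. A transmission from $v$ to $w$ succeeds in a given slot precisely when three things happen simultaneously: (i) $v$ itself transmits, (ii) no interfering node inside the proximity region of $v$ transmits, and (iii) the aggregate interference from nodes outside the proximity region stays low enough that the SINR threshold is met. **First I would** condition on these three events and argue they jointly occur with constant probability, invoking the two lemmas already proved.

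\medskip

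The event that $v$ transmits has probability exactly $p = \gamma/\Delta$. The event that no node in the proximity region $3\Rmax(v)$ transmits has probability at least $1/4$ by Lemma~\ref{lem:1}. For the third event, Theorem~\ref{thm:sum-of-rings-bounded-interference} bounds the \emph{expected} interference from outside the proximity region by $(\delta-1)\noise$; I would convert this to a constant-probability tail bound via Markov's inequality, concluding that with probability at least (say) $1/2$ the external interference is at most $2(\delta-1)\noise$. **The key step** is then to verify that, when all three favorable events hold, the SINR condition is satisfied: since $w$ lies within the broadcasting range $\R_v = (\power_v/(\delta\nmax\bmax))^{1/\amin}$, the received signal power $\power_v/\dist(v,w)^\alpha$ is at least $\delta\nmax\bmax \cdot \noise$ by definition of $\R_v$, while the total interference-plus-noise in the denominator is at most $2(\delta-1)\noise + \noise$, and a short calculation shows the ratio exceeds $\bmax \geq \beta$. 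Multiplying the (nearly) independent probabilities $p \cdot \tfrac14 \cdot \tfrac12$ gives a per-slot success probability of at least $p/8$.

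\medskip

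**The final step** is the standard amplification argument: since each slot succeeds independently with probability at least $p/8$, the probability that $v$ fails to reach $w$ in all $8c/p \cdot \log n$ slots is at most $(1 - p/8)^{8c\log n/p} \leq e^{-c\log n} = n^{-c}$. Taking a union bound over the at most $\Delta \leq n$ neighbors of $v$ (and implicitly over all $n$ senders, absorbed into the constant $c$) yields success for all intended receivers with high probability. **The main obstacle I anticipate** is the Markov-to-SINR bookkeeping: I must track the interplay between $\R_v$ (defined with $\amin$) and $\Rmax_v$ (defined with $\amax$), ensuring the constants in $\gamma$ from Equation~\ref{eq:1} are chosen so that the external interference bound $(\delta-1)\noise$, after the factor-of-two loss from Markov, still leaves enough slack against $\delta\nmax\bmax\noise$ in the numerator to clear the threshold $\bmax$. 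The geometric and constant-chasing details are where care is needed, but the structural argument is a direct composition of the two preceding results with a routine probability amplification.
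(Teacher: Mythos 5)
Your overall architecture is exactly the paper's: a per-slot success probability of at least $p \cdot \tfrac14 \cdot \tfrac12 = p/8$, obtained by combining the transmission probability $p$, Lemma~\ref{lem:1} for the proximity region, and Theorem~\ref{thm:sum-of-rings-bounded-interference} converted to a tail bound via Markov's inequality, followed by the standard amplification $(1-p/8)^{8c/p\log n} \leq e^{-c\log n}$. (Your explicit union bound over the at most $\Delta$ receivers is in fact \emph{more} careful than the paper's proof, which treats ``the SINR condition holds'' as a single per-slot event without distinguishing receivers.)

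However, the step you yourself flagged as the anticipated obstacle is a genuine problem as written. You apply Markov starting from the bound $(\delta-1)\noise$ taken from the statement of Theorem~\ref{thm:sum-of-rings-bounded-interference}, and conclude that with probability at least $1/2$ the external interference is at most $2(\delta-1)\noise$. With that bound the SINR verification fails: the received signal at the edge of the broadcasting range is $\delta\nmax\bmax$ by the definition of $\R_v$, so the ratio is at most $\frac{\delta\nmax\bmax}{2(\delta-1)\nmax+\nmax} = \frac{\delta}{2\delta-1}\,\bmax < \bmax$ for every $\delta>1$; there is no choice of $\delta$ that absorbs the factor of two lost to Markov. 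The resolution is that the \emph{proof} of Theorem~\ref{thm:sum-of-rings-bounded-interference} actually establishes the stronger bound $(\delta-1)\nmax/2$ on the expected external interference --- the factor $\tfrac12$ appears in its final display but was dropped from the theorem statement. Applying Markov at threshold $(\delta-1)\nmax$, i.e.\ at twice that expectation, gives exceedance probability less than $1/2$; the denominator is then at most $(\delta-1)\nmax+\nmax = \delta\nmax$ and the ratio is exactly $\bmax \geq \beta$. This is precisely how the paper's own proof proceeds, and with that single constant repaired your argument goes through unchanged.
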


\begin{proof}
  Since the transmission probability is chosen such that the sum of
  transmission probabilities from within each proximity range is at most
  $\gamma$, we can directly apply
  Theorem~\ref{thm:sum-of-rings-bounded-interference}.  Using the
  theorem, combined with the standard Markov inequality, the probability
  that the interference from nodes outside of the proximity region is
  too high (i.e., higher than $(\delta-1)\nmax$) is less than
  $1/2$. Lemma~\ref{lem:1} states that the probability that no node
  within the proximity range of a node transmits is greater than
  $\frac{1}{4}$. Combining both probabilities with the transmission
  probability of $p$ implies that the probability of a successful
  broadcast is at least $p/8$ in each time slot.  Thus transmitting
  for $8c/p \log n$ time slots results in a successful local broadcast
  with probability at least $1-\frac{1}{n^c}$. A detailed proof can be
  found in Appendix~\ref{sec:putt-piec-togeth}.
\end{proof}

\subsubsection{Without knowledge of $\Delta$}
\label{sec:lb-without-knowl-delta}

Let us now consider the case that the nodes are not given a bound on
the maximum degree $\Delta$.  In contrast to the previous algorithm
for local broadcasting, the ``optimal'' transmission probability is
initially unknown.

In order to create local broadcasting algorithms for this model, a
slow start mechanism can be used
\cite{hm-ttblb-12,yhwl-aodaa-12,ywhl-dlbap-11,gmw-lbpim-08}. In such a
mechanism each node starts with a very low transmission probability in
the range of $\O(1/n)$ and doubles the probability until a certain
number of transmissions are received, and the probability is reset to
a smaller value.  With such a mechanism, local broadcasting in the
(uniform-powered) SINR model can be achieved in $\O(\Delta\log n +
\log^2 n)$ \cite{hm-ttblb-12,yhwl-aodaa-12}. Although different forms
of the slow start mechanisms are used they reset the transmission
probabilities such that the sum of transmission probabilities in each
transmission region can be upper bounded by a constant.

Let us now consider the algorithm of Halld\'{o}rsson and Mitra, described
in \cite{hm-ttblb-12}. We can adapt the algorithm so that local
broadcasting provably works with high probability in the more general
model considered in this paper. This can be done by modifying the
maximal transmission probability to be $\gamma/16$ instead of $1/16$,
which can be done by simply changing Line 7 of Algorithm 1 in
\cite{hm-ttblb-12} from $p_y \leftarrow \min\{\frac{1}{16}, 2p_y\}$ to
$p_y \leftarrow \min\{\frac{\gamma}{16}, 2p_y\}$.  This minimal
adaptation allows us to bound the sum of transmission probabilities
similar to how it is done in the original paper. 
\begin{lemma}
  \label{lem:sending-prob-}
  Let $\mathcal{N}$ be a network with arbitrary transmission power
  assignment, asynchronous node wake-up and let all nodes execute
  Algorithm 1 from \cite{hm-ttblb-12} with maximal transmission
  probability be set to $\gamma/16$.  Then the sum of transmission
  probabilities from within each proximity region is upper bounded by
  $\gamma$.
\end{lemma}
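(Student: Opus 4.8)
The plan is to show that with the modified slow-start cap of $\gamma/16$ instead of $1/16$, the argument from Halld\'{o}rsson and Mitra \cite{hm-ttblb-12} carries over essentially verbatim, with the constant $1/16$ replaced by $\gamma/16$ throughout. The key observation is that the slow-start mechanism is designed so that the sum of transmission probabilities within any region is controlled by the maximal per-node transmission probability times an effective ``capacity'' of the region (the number of nodes that can simultaneously be near the maximum probability). Let me sketch the structure.

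First I would recall the mechanism: each node doubles its transmission probability $p_y$ each time it fails to detect a successful reception, and resets to a small value once it receives enough messages, with the cap now being $\gamma/16$. The core invariant in the original proof is that the slow-start rule keeps the total transmission probability in a region from exceeding a constant, because whenever the sum gets large, enough successful receptions occur to trigger resets. With the cap lowered by a factor of $\gamma$, every node's probability is at most $\gamma/16$, so the same counting argument yields a sum bounded by $\gamma$ times the original constant bound. Concretely, I would argue that the original proof establishes a bound of the form $\sum_{u} p_u \le 16 \cdot (\text{cap})$ (or some fixed multiple of the cap) over each region; substituting $\text{cap} = \gamma/16$ then gives the sum bounded by $\gamma$, which is exactly the claim.

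The main technical point to verify is that the geometric/region-counting step in \cite{hm-ttblb-12} still applies under arbitrary transmission power. In the uniform case a ``region'' corresponds to a fixed-radius disk, whereas here the transmission regions have heterogeneous radii controlled by $\Rmax$, $\Rmin$ and $\Gamma$. I would therefore replace the uniform-radius packing argument with the observation (already used in Lemma~\ref{lem:1} and Theorem~\ref{thm:sum-of-rings-bounded-interference}) that the number of independent nodes within a proximity region is $\O(\Gamma^2)$, and check that the definition of $\gamma$ in Equation~\ref{eq:1} already absorbs this $\Gamma^2$ factor. Since the cap $\gamma/16$ was chosen precisely to make the sum come out to at most $\gamma$, the constants align by construction.

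The hard part will be ensuring that the reset logic of the slow-start — which in the original algorithm is triggered by counting received messages and relies on the node being able to overhear its neighborhood — remains sound when communication links are unidirectional. In particular, I must confirm that the event that drives the doubling and resetting (a node receiving a message successfully) still occurs with the probability needed to keep the invariant, given that a node at the boundary of a large-power neighbor's range may hear it without being heard in return. I expect that this is handled by appealing to Theorem~\ref{thm:sum-of-rings-bounded-interference}, which bounds the outside interference purely in terms of the sum of transmission probabilities and makes no symmetry assumption, so the reception probability bound used to trigger resets transfers directly. The remainder is routine bookkeeping mirroring the original proof, which I would defer to the appendix or cite directly from \cite{hm-ttblb-12}.
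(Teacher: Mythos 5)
The paper gives no standalone proof of this lemma---it is justified only by the remark that changing Line~7 of Algorithm~1 in \cite{hm-ttblb-12} ``allows us to bound the sum of transmission probabilities similar to how it is done in the original paper''---and your proposal is exactly that intended argument: rerun the Halld\'{o}rsson--Mitra contention analysis with the cap rescaled from $1/16$ to $\gamma/16$, absorb the $\O(\Gamma^2)$ heterogeneous-radius packing factor into the definition of $\gamma$ in Equation~\ref{eq:1}, and observe that the reception-driven doubling/reset mechanism never requires link symmetry, which is the same point the paper makes in the remark after Theorem~\ref{thm:towards-tight-bounds-theorem} when explaining why the algorithm of Yu et al.\ does \emph{not} transfer. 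One caveat within this same approach: your sentence ``every node's probability is at most $\gamma/16$, so the same counting argument yields a sum bounded by $\gamma$ times the original constant bound'' is a non sequitur (a per-node cap by itself bounds no sum, since the number of contributing nodes is unrelated to the cap); what actually carries the claim is your refined formulation---that the original slow-start invariant bounds each regional sum by a fixed multiple (sixteen times) of the cap, so that substituting the new cap yields $\gamma$---and this is precisely the step that the paper, too, defers entirely to \cite{hm-ttblb-12}.
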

By combining this result with
Theorem~\ref{thm:sum-of-rings-bounded-interference},
Lemma~\ref{lem:1}, and a similar argumentation as in the previous
section, the transmission is successful at least once with high
probability.  The correctness of the algorithm follows with the
original argumentation in \cite{hm-ttblb-12}.  Using the modified
Algorithm 1 from \cite{hm-ttblb-12}, we get for the more general case
of arbitrary transmission power assignment
\begin{theorem}
\label{thm:towards-tight-bounds-theorem}
There exists an algorithm for which the following holds whp: Each node
$v$ successfully performs a local broadcast within $\O((\Delta + \log
n) \Gamma^{2} \log n)$.
\end{theorem}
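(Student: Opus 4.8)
The plan is to mirror the structure of the proof of Theorem~\ref{thm:communication-ok-arb}, but replace the fixed transmission probability $p = \gamma/\Delta$ with the slow-start schedule of Algorithm~1 from \cite{hm-ttblb-12}, using Lemma~\ref{lem:sending-prob-} to supply the probability bound that Theorem~\ref{thm:sum-of-rings-bounded-interference} requires. First I would fix an arbitrary node $v$ and a time slot during its execution. Lemma~\ref{lem:sending-prob-} guarantees that, since the maximal transmission probability is capped at $\gamma/16$, the sum of transmission probabilities from within each transmission region is at most $\gamma$. This is exactly the hypothesis of Theorem~\ref{thm:sum-of-rings-bounded-interference}, so the probabilistic interference on any node in $B_v$ from outside the proximity region is bounded by $(\delta-1)\nmax$. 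Applying Markov's inequality, the event $P_{\text{2high}}(v)$ that this interference exceeds $(\delta-1)\nmax$ has probability less than $1/2$. Combined with Lemma~\ref{lem:1} (no node in the proximity region transmits with probability at least $1/4$), a transmission by $v$ in a given slot is received by all neighbors with constant probability conditioned on $v$ transmitting.

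Next I would address the runtime, which is the genuinely new ingredient compared to the fixed-$\Delta$ case. The slow-start mechanism raises each node's probability geometrically toward the cap $\gamma/16$ and resets it upon detecting congestion; the analysis in \cite{hm-ttblb-12} shows that in the uniform case every node reaches and maintains a transmission probability of order $1/\Delta$ within $\O(\Delta \log n + \log^2 n)$ slots, after which $\Omega(\log n)$ successful transmissions accrue whp. The key observation is that introducing the $\Gamma^2$ factor into the probability cap (via $\gamma$) rescales the ``effective'' target probability by a factor of $\Theta(\Gamma^2)$, so the number of slots needed to climb to and dwell at the correct probability level is inflated by the same factor. I would therefore argue that the original bound carries over with a multiplicative $\Gamma^2$ overhead, yielding $\O((\Delta + \log n)\Gamma^2 \log n)$ slots.

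The main obstacle I anticipate is not the per-slot success probability, which follows almost mechanically from the two lemmas and Markov's inequality, but rather verifying that the slow-start convergence argument of \cite{hm-ttblb-12} is robust to the two structural changes introduced here: arbitrary (non-uniform) transmission power and the resulting unidirectional communication links. In the uniform setting the congestion-detection feedback that drives the probability resets relies on bidirectional reception, whereas here a node may receive transmissions from a neighbor it cannot itself reach. I would need to confirm that the counting argument bounding the time spent at each probability level depends only on the sum of probabilities within a transmission region — a quantity controlled by Lemma~\ref{lem:sending-prob-} — and not on link symmetry. Assuming this holds, the correctness of the doubling-and-reset dynamics transfers verbatim, and stitching the constant per-slot success probability together with the $\O((\Delta+\log n)\Gamma^2\log n)$ active slots via a standard Chernoff-type argument over $\Omega(\log n)$ transmission attempts gives the whp guarantee.
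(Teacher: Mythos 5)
Your proposal follows essentially the same route as the paper: cap the slow-start probability at $\gamma/16$ so that Lemma~\ref{lem:sending-prob-} supplies the hypothesis of Theorem~\ref{thm:sum-of-rings-bounded-interference}, combine this with Lemma~\ref{lem:1} and Markov's inequality to get a constant per-slot success probability, and inherit the runtime from the original argumentation of \cite{hm-ttblb-12} with a $\Gamma^{2}$ inflation coming from the smaller probability cap. The robustness concern you flag about unidirectional links is exactly the point the paper handles by choosing the Halld\'{o}rsson--Mitra algorithm rather than that of Yu \emph{et.\ al.}, whose reliance on bidirectional communication the paper's remark explicitly identifies as the obstruction.
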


\textit{Remark:} Note that the local broadcasting algorithm by Yu
\emph{et. al.}  \cite{yhwl-aodaa-12} has the same runtime guarantees
as the algorithm by Halld\'{o}rsson and Mitra \cite{hm-ttblb-12}, but
was proposed slightly earlier. However, their algorithm cannot be
transfered to the case of arbitrary transmission power as is heavily
relies on bidirectional communication to operate. Specifically, their
algorithm computes an MIS, acquires information about dominated nodes
and then assigns transmission intervals to the dominated nodes. Thus,
it requires (at least) significant changes to generalize it to
networks of arbitrary transmission power.

\subsection{Variable transmission power}
\label{sec:vari-transm-power}

For local broadcasting, the transmission power is required to be fixed
for at least one full round of local broadcasting. In this section, we
consider a more general setting and allow the nodes to change the
transmission power for each time slot.  As it is not initially clear
which nodes should be considered as intended receivers in such a
setting, our result states the achieved broadcasting range, based on
the number of times certain transmission power levels were exceeded
within the considered time interval.  Note that we assume $\Delta$ to
be known to the nodes in this section.  We shall now briefly discuss
the notation required in this section.
We consider the time slots in one interval $(1,\dots,t)$. For multiple
time intervals that are not continuous, a transmission power of $0$
can be added to fill the gaps. Let $\{0=\power^{[0]}_v,
\power^{[1]}_v, \dots, \power^{[k]}_v\}$ the set of transmission
powers used by $v$ (plus 0), such that $\power^{[j]}_v <
\power^{[j+1]}_v$ for $j=0,1,\dots,k-1$.  We denote the number of time
slots, $v$ used a transmission power of at least $\power^{[j]}_v$ by
$T_j$.  Let $\R_v^{[j]}$ be the broadcasting range corresponding to
$\power^{[j]}_v$.

\begin{theorem}
  \label{thm:lb-variable-tx-power}
  Let all the nodes in the network transmit with probability at most
  $p = \gamma/\Delta$ and a variable transmission power between
  $\Rmin$ and $\Rmax$. Let $v$ be an arbitrary node that transmits
  with variable transmission powers during the interval
  $(1,\dots,t)$. For $j$ maximal such that $T_j > 8c/p \log n$, all
  nodes closer to $v$ than $\R_v^{[j]}$ received $v$'s message whp for
  an arbitrary constant $c>1$.
\end{theorem}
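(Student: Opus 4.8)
The plan is to reduce Theorem~\ref{thm:lb-variable-tx-power} to the fixed-power analysis of Theorem~\ref{thm:communication-ok-arb} by restricting attention to a single, well-chosen transmission-power level. Let me think about the structure of the claim first.

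We have a node $v$ using variable powers over slots $(1,\dots,t)$. The power levels are $\power^{[0]}=0 < \power^{[1]} < \dots < \power^{[k]}$, and $T_j$ counts slots where $v$ transmitted with power at least $\power^{[j]}$. We pick $j$ maximal with $T_j > 8c/p \log n$. We want: all nodes within $\R_v^{[j]}$ received the message whp.

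Key insight: If $v$ transmits with power at least $\power^{[j]}$, then in that slot the broadcasting range is at least $\R_v^{[j]}$ (since broadcasting range is increasing in power). So any node $w$ within $\R_v^{[j]}$ is within $v$'s actual broadcasting region in each of those $T_j$ slots. Now I want to argue that in each such slot, reception at $w$ succeeds with constant probability $\ge p/8$ — but wait, the issue is that the *other* nodes are also using variable power. I need Lemma~\ref{lem:1} and Theorem~\ref{thm:sum-of-rings-bounded-interference} to hold regardless of what powers others use.

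Let me check: Theorem~\ref{thm:sum-of-rings-bounded-interference} bounds interference using that each node's power $\power_w \le \bmax\nmax \Rmax^{\amax}$ roughly, and the sum of transmission probabilities per region is $\le \gamma$. These bounds are power-agnostic in the sense that they use the global $\Rmax$ as the worst case. With variable power, as long as each transmitting node uses power corresponding to range $\le \Rmax$ and transmits with prob $\le p=\gamma/\Delta$, the per-region probability sum is still $\le \gamma$, and the interference bound still holds. Similarly Lemma~\ref{lem:1} only uses $\sum p_u$ within the proximity region. Good — so both conditions give constant success probability per slot.

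So the plan: For the fixed target node $w$ within $\R_v^{[j]}$, consider only the $T_j$ slots where $v$ used power $\ge \power^{[j]}$. In each such slot, $w \in B_v$ (the actual broadcasting region), $v$ transmits with probability $\le p$, and by Lemma~\ref{lem:1} and Theorem~\ref{thm:sum-of-rings-bounded-interference} the probability that $v$ transmits, no proximity node transmits, and far interference is low is at least $p/8$ — giving successful reception at $w$. These $T_j$ slots are independent (transmission decisions per slot are independent), so the probability $w$ never receives is at most $(1-p/8)^{T_j}$. Since $T_j > 8c/p \log n$, this is at most $n^{-c}$. A union bound over the $\le n$ target nodes gives whp success.

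The main obstacle I expect is verifying that Theorem~\ref{thm:sum-of-rings-bounded-interference} and Lemma~\ref{lem:1} genuinely remain valid under variable power of the interfering nodes — i.e., that the per-slot constant-probability argument is robust to each interferer independently choosing its own power in each slot. This requires observing that the interference bound was proved using the global worst-case $\Rmax$ for every node's range, so it is insensitive to which specific power each interferer picks, provided all picked powers correspond to ranges in $[\Rmin,\Rmax]$ and the probability-sum-per-region bound $\gamma$ is maintained. A secondary subtlety is the choice of $j$ maximal: we need $T_j$ large enough for the concentration, but the maximality of $j$ is only used to identify the largest range for which enough transmission slots accumulated — reception at smaller ranges follows a fortiori since $\R_v^{[j']} \le \R_v^{[j]}$ for $j' \le j$ and those slots are a superset.
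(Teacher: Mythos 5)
Your proposal is correct and follows essentially the same route as the paper: restrict attention to the $T_j > 8c/p \log n$ slots in which $v$ uses power at least $\power^{[j]}_v$, observe that the per-region transmission-probability sums remain at most $\gamma$ so Lemma~\ref{lem:1} and Theorem~\ref{thm:sum-of-rings-bounded-interference} still apply under variable interferer power, and then concentrate over those slots as in Theorem~\ref{thm:communication-ok-arb}. The only cosmetic difference is your closing union bound over receivers, which the paper does not need because its per-slot success event (no transmission in the proximity region and far interference below $(\delta-1)\nmax$) already implies simultaneous reception at all nodes closer than $\R_v^{[j]}$.
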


\begin{proof}
  Let $j$ be maximal such that $T_j > 8c/p \log n$.  It holds that $v$
  transmits with probability $p$ and transmission power at least
  $\power^{[j]}_v$ in at least $8c/p \log n$ time slots. Let us
  consider such a time slot $i$.  As the sum of transmission
  probabilities from within each proximity range is obviously bounded
  by at most $\gamma$, we can apply our method to bound the
  interference. It holds due to
  Theorem~\ref{thm:sum-of-rings-bounded-interference} and
  Lemma~\ref{lem:1} that a message transmitted by $v$ in time slot $i$
  is received by nodes closer to $v$ than $\R_v^{[j]}$ with
  probability at least $1/8$. Combined with the transmission
  probability $p$ and considered over $8c/p \log n$ time slots, this
  results in a success probability of at least $1-\frac{1}{n^c}$ with
  an argumentation similar to the that in the proof of
  Theorem~\ref{thm:communication-ok-arb} in
  Appendix~\ref{sec:putt-piec-togeth}.
\end{proof}

\section{Distributed Node Coloring and MIS}
\label{sec:node-coloring-mis}

We shall demonstrate the applicability of our results to existing
algorithmic results in the uniform SINR model in this
section. Therefore we consider a distributed node coloring
algorithm\cite{dt-dncsi-10}, and show how this algorithm can be
transfered to the case of arbitrary transmission powers. Distributed
node coloring is a fundamental problem in wireless networks, as a node
coloring can be used to compute a schedule of transmissions by
assigning each color to a different time slot.  Thus, efficient
transmissions based on a \emph{time-division-multiple-access} (TDMA)
schedule can be reduced to a node coloring.  The algorithm we consider
computes a node coloring that ensures that two nodes with the same
color cannot communicate directly.  This does not necessarily result
in a transmission schedule that is feasible in the SINR model,
however, one can use additional techniques like those described in
\cite{dt-dncsi-10} or \cite{fw-olbsc-13} to transform such a node
coloring to a local broadcasting schedule that is feasible in the SINR
model.  Let us now define some notation required for the coloring
problem. For two nodes $v,u \in V$ we say that there is a
\emph{communication link} from $v$ to $u$ if $u$ is in the
broadcasting region of $v$. We say that there is a
\emph{unidirectional} communication link from $v$ to $u$ if there is a
communication link from $v$ to $u$, but not from $u$ to $v$. In this
case $v$ \emph{dominates} $u$. If both communication links are
available we say that it is \emph{bidirectional}.  We call two nodes
$u$ and $v$ \emph{independent} if there is no communication link
between $u$ and $v$. Accordingly, a set is independent if each two
nodes in the set are mutually independent. A \emph{node coloring} is
\emph{valid} if each color forms an independent set.

Before stating the algorithms, we shall briefly characterize the
communication graph implied by arbitrary transmission powers in the
SINR model. Obviously, it is still based on a disk graph, but, not a
unit disk graph as in the uniform case. Additionally, there are two
main characteristics that are introduced by directed communication
links and are relevant for graph-based algorithms in this
setting. First, unidirectional communication links can form long
directed paths. This is formalized in the following definition.
\begin{definition} 
  Given a network $N$ and the induced communication graph
  $G=(V,E)$. Let $G'$ be the graph that remains after deleting all
  bidirectional edges from $G$. The \emph{longest directed path} in
  the network is defined as the longest simple path in $G'$. We denote
  the length of the longest directed path in a network by $\ell$.
\end{definition} 
Second, these directed paths cannot form a directed circuit. This
holds since in any circle in the communication graph, there must be a
bidirectional communication link. Consider a directed path consisting
of the nodes $(v_1,\dots,v_\ell)$.  It holds that the transmission
range decreases monotonically, i.e., $\Rmax_{v_i} \geq
\Rmax_{v_{i+1}}$ for $i = 1,\dots,\ell-1$. If a node $v_i$ can be
reached from $v_j$ with $i \leq j$, there must be a bidirectional
communication link as $v_i$ reaches $v_j$ as well due to $\Rmax_{v_i}
\geq \Rmax_{v_{j}}$.

\subsection{The Coloring Algorithm}
\label{sec:coloring-algorithm}

Let us now state the coloring algorithm.  The core of our algorithm is
based on the coloring algorithm by Moscibroda and Wattenhofer designed
for unstructured radio networks in \cite{mw-curn-05, mw-curn-08}. It
has been adapted to the case of uniform transmission powers in the
SINR model by Derbel and Talbi in \cite{dt-dncsi-10}. In this section
we extend the algorithm to work in the case of arbitrary transmission
power assignments. A state diagram of the algorithm can be found in
Figure~\ref{fig:coloring-algo}, and pseudocode of the states of the
algorithm can be found in Algorithms~\ref{algo:main} -
\ref{algo:colored}. Note that some technical details regarding the
wake-up of nodes and the impact on the algorithm are omitted here and
in the state diagram for simplicity, but are discussed in
Section~\ref{sec:asynchr-node-wakeup}.

We will now give an overview over the algorithm. The algorithm starts
with a three-way handshake protocol called neighborhood learning.
This allows each node to learn which of its incoming edges are
effectively bidirectional communication links. After this learning
stage, we allow a node $v$ to participate in the (modified) coloring
algorithm only if $v$ is not \emph{dominated}, i.e., if there is no
other uncolored node $w$ such that $w$ reaches $v$ but $v$ does not
reach $w$.

\begin{figure*}[hbt] 
  \centering
  \includegraphics[width=1\textwidth]{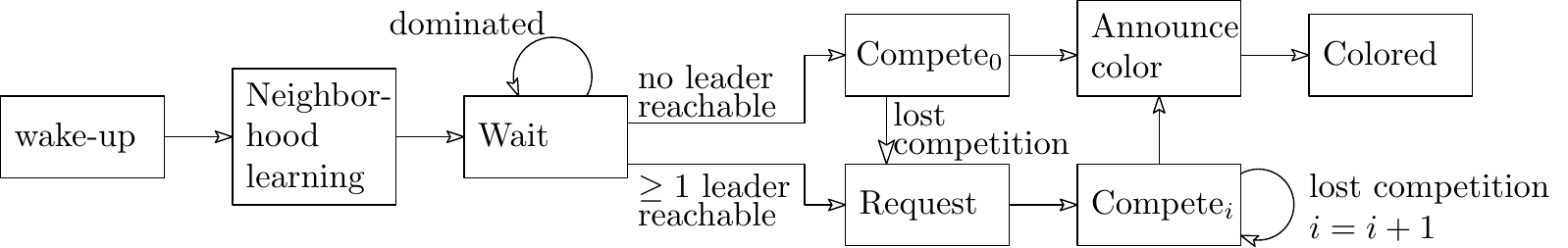}
  \caption{State diagram of the MW-coloring algorithm.}
  \label{fig:coloring-algo}
\end{figure*}

The coloring algorithm for node $v$ starts with a listening phase,
which is long enough so that $v$ knows the current status of all other
nodes that are awake and can reach $v$. Afterwards, if there is a
leader $w$ to which bidirectional communication is possible, $v$
enters the request state, and requests a color from $w$. After $w$
answers the request by assigning a color $j$, $v$ tries to verify the
assigned color $j$. If this is not successful (i.e., $v$ loses against
another node competing for $j$ that reaches $v$), $v$ increases $j$ by
one and retries. If $v$ is successful, it announces its success until
all the nodes that can hear $v$ are informed about $v$'s status and
hence know that $v$ will color itself with color $j$.

If there is no leader that can communicate bidirectionally with $v$,
$v$ tries to compete for the status leader. If this is not successful,
$v$ enters the request state (and proceeds as above) as there is a
leader with bidirectional communication available now. Note that $v$
does not lose against leader nodes that dominate $v$ as $v$ cannot
request a color from them. If $v$ is successful in becoming leader, it
selects a free leader color and announces its choice
so that all nodes that can be reached by $v$ are informed. 
After the announcement phase, the node is officially colored and will
only periodically transmit its color and serve color requests as they
arrive. 
Note that we will call the main coloring states of the algorithm (Compete,
Request, Announce and Colored) the \emph{core-coloring algorithm} in
this section.

Our presentation of some parts of the algorithms is based on the
presentation in \cite{dt-dncsi-10}. The pseudocode for the algorithm
can be found in Algorithms~\ref{algo:main} - \ref{algo:colored}. Let
$\chi(P_v)$ in Algorithm~\ref{algo:compete} be maximal such that
$\chi(P_v) \not \in \{d_v(w)-\zeta_i, \dots, d_v(w)+\zeta_i\}$ for
each $w\in P_v$ and $\chi(P_v) \leq 0$ and $t_1$ in
Algorithm~\ref{algo:colored} be the number of rounds a newly colored
node has to wait before serving color requests.

\begin{algorithm}[h!]
  \caption{Distributed node coloring in non-uniform power networks for
    node $v$}
  \label{algo:main}
  \begin{algorithmic}[1] %
    \State [ In$(v)$, Out$(v) ] =$ Neighborhood
    learning()%

    \State Let the node store each received leader color ($M_V^i(w)$
    for $i \leq \Delta+1$) for $\kappa_s$ time slots in
    C$_\text{taken}$ 

    \State Listen for $(38\Gamma^2 + 3) \kappa_s$ time slots 
    
    \While {$c_v = -1$} \Comment{This is state Wait} \label{algo1:main-loop}%
    \If { (In$(v) \backslash$Out$(v)) \backslash $C$_\text{taken}$
      $ = \emptyset$}  \label{algo:main:notdominated}%
    \If { (In$(v) \cap $Out$(v) \cap $C$_\text{taken} = \emptyset$}
    \State Transition to Algorithm~\ref{algo:compete}: Compete$_0()$
    \Else 
    \State Request($w$) for a $w \in $ In$(v) \cap $Out$(v) \cap $C$_\text{taken}$
    \EndIf
    \Else %
    \State wait for $\kappa_s$ time slots%
    \EndIf%
    \EndWhile%
\end{algorithmic}
\end{algorithm}

\begin{algorithm}[h!]
  \caption{Neighborhood-Learning() for node $v$}
  \label{algo:neighborhood-learning}
  The neighborhood learning algorithm is a simple three-way-handshake
  protocol as introduced in \cite{t-ssn-75}. A node $v$ starts the
  neighborhood learning algorithm and thus sends a learning request
  with its own ID with probability $p_s$ for $\kappa_s$ time
  slots. For each reply it receives (at most $\Delta$), $v$ itself
  will acknowledge the reception.
\end{algorithm}


\begin{algorithm}[h!]
  \caption{Compete$_i$() for node $v$}
  \label{algo:compete}
  \begin{algorithmic}[1] 
    \State The algorithm is based on the presentation in
    \cite{dt-dncsi-10}.  

  \State $P_v = \emptyset$, $\zeta_i = \begin{cases} \kappa_l &\mbox{if } i = 0 \\
    \kappa_s & \mbox{otherwise} \end{cases} $

    \State $\text{Next} = \begin{cases} \text{Request} &\mbox{if } i = 0 \\
      \text{Compete}_{i+1} & \mbox{otherwise} \end{cases} $ 

    \For {$\kappa_s$ time slots}
    \State \textbf{for each} $w \in P_v$ \textbf{do} $d_v(w) = d_v(w)
    + 1$
    \State \textbf{if} $M_A^i(w, c_w)$ rec. \textbf{then} $P_v =
    P_v \cup \{w\}$; $d_v(w) = c_w$
    \State \textbf{if} $M_C^i(w)$ rec. \textbf{then} goto Next; leader
    = $w$
    \EndFor

    \State $c_v = \chi(P_v)$

    \While{$true$}
    \State $c_v = c_v + 1$
    \If {$c_v > \kappa_s$}
    \State Announce$_j$() for $j \begin{cases} \text{minimal} \not \in
      $ C$_\text{taken} &\mbox{if } i = 0 \\
     i & \mbox{otherwise} \end{cases}$
    \EndIf

    \State \textbf{for each} $w \in P_v$ \textbf{do} $d_v(w) = d_v(w)
    + 1$
    \State transmit $M_A^i(v,c_v)$ with probability $p_s$
    \State \textbf{if} $M_C^i(w$ rec. \textbf{then} goto Next; leader
    = $w$
    \If {$M_A^i(w, c_w)$ rec. } 
    \State$P_v = P_v \cup \{w\}$; $d_v(w) = c_w$
    \State \textbf{if} $|c_v - c_w| \leq \zeta_i$ \textbf{then}
    $c_v = \chi(P_v)$
    \EndIf
    \EndWhile
\end{algorithmic}
\end{algorithm}

\begin{algorithm}[h!]
  \caption{Request($w$) for node $v$}
  \label{algo:request}
  \begin{algorithmic}[1] 
    \State Transmit $M_R^v(w)$ with prob. $p_s$
    for $\kappa_s$ slots to leader $w$ 
    \State Wait for $\kappa_s$ rounds to receive color assignment $j$.
    \State Transition to Algorithm~\ref{algo:compete} (Compete) with $i = j$
\end{algorithmic}
\end{algorithm}

\begin{algorithm}[h!]
  \caption{Announce$_i$() for node $v$}
  \label{algo:announce}
  \begin{algorithmic}[1] 
    \State Transmit the $M_C^i(v)$ announcement with $p_l$($p_s$) for
    $\kappa_l$($\kappa_s$) slots for leader / non-leader colors
    \State Wait \& Transmit $M_C^i(v)$ with $p_s$ for $\kappa_s$ slots
    \State Goto Algorithm~\ref{algo:colored}: Colored$_i$()
\end{algorithmic}
\end{algorithm}

\begin{algorithm}[h!]
  \caption{Colored$_i$() for node $v$}
  \label{algo:colored}
  \begin{algorithmic}[1] 
    \State count $= 0$, current $= -1$, serveCount $= 1$
    \While {$true$} 
    \State count = count$ + 1$
    \State Transmit $M_C^i(v)$ with probability $p_s$
    \State \textbf{if} $M_R^w(v)$ received \textbf{then} $Q = Q$.add($w$)
    \If {count $> t_1$ and $i \leq 9\Gamma^2+1$} 
    \State \Comment{Only for leader nodes that serve requests}
    \If {count $> t_1+\kappa_l$ or current $= -1$}
    \State count $= t_1$, serveCount = serveCount $+ 1$
    \State $j = $serveCount $\cdot 38\Gamma^2$.
    \State current = $Q$.first$()$ (or -1 if $Q$ empty)
    \EndIf
    \If {current $\not = -1$}
    \State Transmit $M_S^\text{current}(j)$ for $\kappa_l$ slots with
    prob. $p_l$, where $j$ is a free color (interval)
    \EndIf
    \EndIf
    \EndWhile
\end{algorithmic}
\end{algorithm}


In order to allow leaders faster communication, the algorithm uses two
different transmission probabilities. Let the transmission probability
commonly used by non-leader nodes be $p_s = \gamma/(2\Delta)$ and the
transmission probability reserved for special leader tasks (i.e.,
announcement of winning a leader competition or answering color
requests) $p_l = \gamma/(18\Gamma^2)$.


\subsection{Analysis}

Let us now begin with the analysis of the algorithm, which is split in
two parts. The first part shows that the transmissions conducted in
the algorithm are successful with high probability. In the second part
we will show that the algorithm computes a valid $\O(\Gamma^2\Delta)$
coloring, and terminates after at most $\O((\ell + \Delta)
\Gamma^{6} \Delta \log n)$ time slots.

\subsubsection{Transmissions are successful}
\label{sec:coloring-transm-success}

In order to apply the bound in the interference shown in
Section~\ref{sec:bound-interf}, we need to bound the sum of sending
probabilities from within each transmission region. 

\begin{lemma}
  \label{lem:coloring-leaders-bounded}
  Let $v$ be an arbitrary leader node. Then there are at most
  $9\Gamma^2$ other leader nodes in the transmission range of $v$.
\end{lemma}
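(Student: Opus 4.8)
The plan is to bound the number of leader nodes within a transmission range by a packing argument: leaders must be mutually independent (no communication link between any two), so their broadcasting regions are disjoint in a suitable sense, and a volume/area comparison against the disk of radius $\Rmax_v$ bounds how many can fit.

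First I would establish that any two leader nodes are mutually independent. This should follow from the coloring algorithm's structure: a node only becomes a leader by winning the Compete phase, and the leader-election logic (together with the neighborhood-learning stage) ensures that two nodes with a bidirectional communication link cannot both attain leader status — one would hear the other's $M_C$ announcement and defer. Thus any two leaders $u, w$ have no bidirectional link. Because the transmission range decreases monotonically along any directed path (as shown in the discussion preceding the definition of $\ell$), if $u$ and $w$ were connected by a unidirectional link the dominated node would not be an uncolored competing leader; so in fact two leaders cannot have any communication link, i.e. they are independent. Independence means each pair is separated by more than the smaller of their two broadcasting ranges, hence by at least $\Rmin$.

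Next I would run the geometric packing argument. Place a disk of radius $\Rmin/2$ around each leader in the transmission range of $v$. Since all such leaders lie within distance $\Rmax$ of $v$, these small disks are contained in the disk of radius $\Rmax + \Rmin/2 \le \tfrac{3}{2}\Rmax$ about $v$. Because the leaders are pairwise at distance at least $\Rmin$, the radius-$\Rmin/2$ disks are pairwise disjoint. Comparing total area,
\begin{align*}
  k \cdot \pi (\Rmin/2)^2 \le \pi \bigl(\tfrac{3}{2}\Rmax\bigr)^2,
\end{align*}
where $k$ is the number of leaders, which yields $k \le 9\,(\Rmax/\Rmin)^2 = 9\Gamma^2$. (This counts $v$ among the leaders or not depending on the convention; the statement asks for ``other'' leaders, so I would take the disk around $v$ to be excluded, which only helps.)

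\textbf{The main obstacle} is the first step: rigorously arguing that two leaders are genuinely independent rather than merely lacking a \emph{bidirectional} link. Distinct leaders could in principle be joined by a unidirectional link, and I must rule this out using the algorithm's semantics — specifically that a dominated node cannot reach its dominator to compete, combined with the monotonicity of $\Rmax$ along directed paths and the fact that both nodes passed through Compete as undominated, uncolored nodes simultaneously. If full independence cannot be shown, the fallback is to separate the two cases: for pairs with no link the packing above applies directly, and for any unidirectional pair I would argue the dominated one could not have become a leader, so such pairs do not arise among leaders. The remaining area computation is routine and I would not belabor the constants.
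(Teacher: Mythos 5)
Your geometric core---pairwise separation of at least $\Rmin$, disjoint disks of radius $\Rmin/2$ packed into a disk of radius $\Rmax+\Rmin/2$, area ratio at most $9\Gamma^2$---is exactly the paper's argument. But the step you yourself flag as the main obstacle contains a genuine error: full independence of leaders is \emph{false}, and the paper says so explicitly in the first line of its proof (``leaders do not necessarily form an independent set, as an unidirectional communication link between leaders is allowed in the algorithm''). Your argument for ruling out unidirectional leader pairs---that the dominated endpoint could never have been an uncolored competing leader---misreads the algorithm's semantics: domination only blocks a node from entering the core coloring algorithm while the dominator is still \emph{uncolored}, and the algorithm explicitly lets a node win the leader competition against leaders that dominate it (``$v$ does not lose against leader nodes that dominate $v$ as $v$ cannot request a color from them''). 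So once a dominator is a colored leader, the dominated node may itself become a leader, and unidirectional leader--leader links do arise. The same misreading breaks your proposed fallback (``such pairs do not arise among leaders'').

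The good news is that nothing in your packing argument actually needs full independence. The only quantitative input is that no two leaders lie within distance $\Rmin$ of each other, and this follows from the absence of \emph{bidirectional} links alone, which you do establish correctly: if $\dist(u,w) \leq \Rmin$, then each reaches the other (both ranges are at least $\Rmin$), giving a bidirectional link, which is impossible since one of the two would have requested a color from the other instead of becoming leader. Unidirectional pairs are harmless for the packing: if $u$ does not reach $w$, then by definition $\dist(u,w) > \R_u \geq \Rmin$, so such a pair is already separated. Dropping the independence claim and keeping only this weaker separation statement turns your proposal into precisely the paper's proof.
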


\begin{proof}
  Note that leaders do not necessarily form an independent set, as an
  unidirectional communication link between leaders is allowed in the
  algorithm.
  Let us consider an arbitrary leader node $v$. It holds that within
  distance $\Rmin$ there cannot be another leader, as otherwise there
  would be a bidirectional communication link between two leader
  nodes. This is not possible as one of them would not have become
  leader but requested a color from the other. Thus it holds that for
  discs of size $\Rmin/2$ around each leader node in $v$'s
  neighborhood, these discs do not intersect. Hence it holds that
  there can be at most $\frac{(\Rmax+\Rmin/2)^2}{(\Rmin/2)^2} \leq
  9\Gamma^2$ leader nodes in a maximal transmission range.
\end{proof}

\begin{lemma}
  \label{lem:coloring-sending-prob-ok}
  Let leader nodes send with probability $p_l$ and non-leader nodes
  with probability $p_s$, then the sum of transmission probabilities
  from within each transmission region is upper bounded by $\gamma$.
\end{lemma}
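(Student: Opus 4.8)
The plan is to split the per-slot transmission probability of the nodes in a transmission region into an ``ordinary'' part (transmitted with $p_s$) and a ``special leader'' part (transmitted with $p_l$), and to show that each part contributes at most $\gamma/2$. The constants in $p_s = \gamma/(2\Delta)$ and $p_l = \gamma/(18\Gamma^2)$ are chosen precisely so that this split sums to $\gamma$.

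First I would bound the ordinary transmissions. By the definition of $\Delta$, the transmission region of any node $v$ (the disc of radius $\Rmax_v$ around $v$) contains at most $\Delta$ nodes in total. Since every node---leader or not---performs its ordinary transmissions with probability at most $p_s$, the sum of these contributions over the region is at most $\Delta \cdot p_s = \gamma/2$. Charging every node against this single budget also absorbs the $p_s$-transmissions that a leader makes (such as the color message $M_C^i(v)$ broadcast in the Colored state), so that no ordinary transmission is left unaccounted for.

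Next I would bound the additional special leader transmissions, which occur with the larger probability $p_l$ (the announcement of a won leader competition and the color-assignment messages $M_S$). By Lemma~\ref{lem:coloring-leaders-bounded}, a transmission region contains at most $9\Gamma^2$ leaders, so the sum of the $p_l$-contributions over the region is at most $9\Gamma^2 \cdot p_l = \gamma/2$. Adding the two budgets gives $\gamma/2 + \gamma/2 = \gamma$, as claimed.

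The main difficulty is bookkeeping rather than estimation: I must ensure that each node's full per-slot transmission probability is counted, even though a single leader in the Colored state may in the same slot attempt both an ordinary $M_C$ transmission (probability $p_s$) and a special $M_S$ transmission (probability $p_l$). The clean resolution is exactly the split above---charge the $p_s$-part of \emph{all} nodes against the global $\Delta$-budget and only the extra $p_l$-part against the leader-count budget---so that a leader's combined probability of roughly $p_s + p_l$ is neither omitted nor double counted. I would also note that the leader count $9\Gamma^2$ rests only on the $\Rmin$-separation of leaders and a packing of $\Rmin/2$-discs, and hence applies to every transmission region, as the interference bound of Section~\ref{sec:bound-interf} requires.
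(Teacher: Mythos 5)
Your proposal is correct and follows essentially the same argument as the paper: both bound the sum by $\Delta p_s + 9\Gamma^2 p_l = \gamma/2 + \gamma/2 = \gamma$, invoking Lemma~\ref{lem:coloring-leaders-bounded} for the leader count. Your extra care about a leader contributing both a $p_s$-part and a $p_l$-part in the same slot is a slightly more precise accounting than the paper's (which treats leaders and the at most $\Delta$ ``other'' nodes as disjoint contributors), but it is the same decomposition and the same constants.
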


\begin{proof}
  Let us consider an arbitrary node $v$ and sum over the transmission
  probabilities from within $v$'s transmission region 
  \begin{align*}
    \sum_{w \in B_v} p_w \leq 9\Gamma^2p_l + \Delta p_s \leq \gamma
  \end{align*}
  This holds as at most $9\Gamma^2$ leader nodes from each
  transmission region may transmit with probability $p_l$ due to
  Lemma~\ref{lem:coloring-leaders-bounded}, while at most $\Delta$
  other nodes in $v$'s neighborhood transmit with probability at most
  $p_s$.
\end{proof}
The corollary follows from the lemma along with the argumentation for
Theorem~\ref{thm:communication-ok-arb}. It shows that the limited
number of leader nodes are able to communicate to their neighbors in
$\O(\log n)$ time slots, while non-leader nodes require $\O(\Delta
\log n)$ time slots. Overall it implies that all transmissions in the
algorithm are successful w.h.p.
\begin{corollary}
  \label{cor:coloring-message-transmission-times}
  A message that is transmitted with probability $p_l$ ($p_s$) for
  $\kappa_l = 8c/p_l \log n$ ($\kappa_s = 8c/p_s \log n$) time slots reaches its intended
  receivers w.h.p.  
\end{corollary}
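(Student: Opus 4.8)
The plan is to reuse, essentially verbatim, the per-slot success argument that establishes Theorem~\ref{thm:communication-ok-arb}, applying it once for a sender transmitting with probability $p_l$ and once for a sender transmitting with probability $p_s$. The crucial observation is that the entire interference machinery of Section~\ref{sec:bound-interf} depends on the transmitting nodes only through the hypothesis that the sum of transmission probabilities from within each transmission region is at most $\gamma$, and Lemma~\ref{lem:coloring-sending-prob-ok} already guarantees exactly this for the mixed regime in which leaders use $p_l$ and non-leaders use $p_s$. Hence both Theorem~\ref{thm:sum-of-rings-bounded-interference} and Lemma~\ref{lem:1} apply unchanged, regardless of whether the node whose transmission we track is a leader or not.

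First I would fix an arbitrary sender and an arbitrary intended receiver $u$ in its broadcasting region, and bound the probability of success in a single time slot. By Lemma~\ref{lem:coloring-sending-prob-ok} the sum-of-probabilities hypothesis holds, so Theorem~\ref{thm:sum-of-rings-bounded-interference} bounds the interference from outside the proximity region by $(\delta-1)\nmax/2$ in expectation; Markov's inequality then gives that this interference exceeds $(\delta-1)\nmax$ with probability at most $1/2$. Independently, Lemma~\ref{lem:1} shows that with probability at least $1/4$ no node of the proximity region other than the sender transmits (dropping the sender from the product in Lemma~\ref{lem:1} only increases the bound). Since these two events concern disjoint sets of nodes, and the sender's own coin flip is likewise independent, I would multiply the three quantities: a sender using probability $q \in \{p_l, p_s\}$ succeeds toward $u$ in a given slot with probability at least $q \cdot \tfrac14 \cdot \tfrac12 = q/8$.

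It then remains to amplify over the allotted number of slots. Because the transmission decisions are drawn independently in each slot, the per-slot success events are independent, so the probability that none of the $T$ slots succeeds is at most
\begin{align*}
  \left( 1 - \frac{q}{8} \right)^{T} \le \exp\!\left( -\frac{q}{8}\,T \right).
\end{align*}
Substituting $q = p_l$, $T = \kappa_l = 8c/p_l \log n$ (respectively $q = p_s$, $T = \kappa_s = 8c/p_s \log n$) yields a failure probability of at most $\exp(-c \log n) = n^{-c}$, so each intended receiver is reached with probability at least $1-n^{-c}$, i.e., w.h.p. This is precisely the computation carried out in detail for the single-probability case in Appendix~\ref{sec:putt-piec-togeth}, and I would simply invoke it for the two values of $q$.

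The step that needs the most care --- though it is conceptual rather than computational --- is justifying that nothing in the argument hinges on which of the two probabilities the tracked sender uses. The interference contributed by the far rings and by the proximity region aggregates transmissions of leaders and non-leaders alike, and it is the single bound $\gamma$ of Lemma~\ref{lem:coloring-sending-prob-ok} (which already folds in the $9\Gamma^2$ leaders at $p_l$ together with the $\Delta$ nodes at $p_s$) that controls it; only the sender's own transmission probability $q$, and correspondingly the slot count $8c/q \log n$, changes between the two cases. Once this is observed, the corollary follows from two instances of the established estimate with no further work.
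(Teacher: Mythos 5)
Your proposal is correct and takes essentially the same route as the paper: the paper also obtains the corollary by combining Lemma~\ref{lem:coloring-sending-prob-ok} (which supplies the sum-of-probabilities hypothesis $\gamma$) with the argumentation for Theorem~\ref{thm:communication-ok-arb}, i.e., Theorem~\ref{thm:sum-of-rings-bounded-interference} plus Markov, Lemma~\ref{lem:1}, a per-slot success probability of $q/8$, and amplification over $8c/q \log n$ slots for $q \in \{p_l, p_s\}$. Your explicit remark that only the sender's own probability (and hence the slot count) changes between the leader and non-leader cases is exactly the point the paper leaves implicit.
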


This shows that communication is successful with high probability even
in this more general case. Combined with the algorithmic changes and
the refined analysis in the full version of this paper
\cite{fw-atps-14}, the modified MW-coloring algorithm computes a
coloring with $\O(\Gamma^2 \Delta)$ colors such that each color forms
an independent set in $\O((\Delta+\ell)\Gamma^{4} \Delta \log n)$ time
slots.  This highlights the applicability of our method to bound the
interference in networks of nodes with arbitrary transmission powers.

\subsubsection{Runtime of the algorithm}
\label{sec:coloring-algo-runtime}

In this section we consider the runtime of the distributed node
coloring algorithm. We will first state the main result of this
section.
 
\begin{theorem}
  \label{thm:coloring-runtime}
  After running the coloring algorithm (Algorithm~\ref{algo:main}) for
  at most $\O((\Delta+\ell)\Gamma^{4} \Delta \log n)$ time
  slots, all nodes are colored.
\end{theorem}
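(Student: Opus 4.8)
The plan is to bound the runtime by separating two sources of delay: the \emph{dependency depth} imposed by unidirectional (domination) links, which produces the $\ell$ term, and the \emph{local competition and serving} among mutually reachable nodes, which produces the $\Delta$ term. Throughout I rely on Corollary~\ref{cor:coloring-message-transmission-times}: every message sent with probability $p_s$ (resp.\ $p_l$) for $\kappa_s=\O(\Gamma^2\Delta\log n)$ (resp.\ $\kappa_l=\O(\Gamma^4\log n)$) slots reaches its intended receivers w.h.p., so that the state transitions in Algorithms~\ref{algo:main}--\ref{algo:colored} behave as intended and I may reason purely about the control flow.

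First I would formalize the dependency structure. Using the facts established just before the theorem---that the unidirectional links form a DAG (a directed path cannot close into a cycle, since $\Rmax$ decreases monotonically along such a path) whose longest simple path has length $\ell$---I assign each node $v$ a depth $d(v)$ equal to the length of the longest directed path in $G'$ ending at $v$, so that $d(v)\le\ell$. The enabling test on line~\ref{algo:main:notdominated} of Algorithm~\ref{algo:main} guarantees that $v$ leaves the Wait loop toward Compete or Request only once every uncolored node dominating $v$ has become colored. Hence the time at which $v$ becomes \emph{enabled} obeys a recurrence along the DAG, and an induction on $d(v)$ reduces the global runtime to a sum, over a critical chain, of per-node coloring times.

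Next I would bound the time for a single node to reach the Colored state once enabled. A non-leader node requests a color from a bidirectional leader and is served by the one-at-a-time mechanism of Algorithm~\ref{algo:colored}: a leader may carry up to $\Delta$ requesters and each serve-round lasts $\O(\kappa_l)$ slots, so clearing a leader's queue costs $\O(\Delta\kappa_l)=\O(\Delta\Gamma^4\log n)$, and by Lemma~\ref{lem:coloring-leaders-bounded} there are only $\O(\Gamma^2)$ leaders per region. A competing node (Algorithm~\ref{algo:compete}) runs each invocation for $\O(\kappa_s)$ slots, and the number of invocations before it either wins a leader color or falls back to Request is bounded by the number of competing neighbors, since each loss forces its candidate past one more occupied value in the $\O(\Gamma^2\Delta)$-size palette. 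Adding the one-time listening phase of length $(38\Gamma^2+3)\kappa_s=\O(\Gamma^4\Delta\log n)$ bounds the per-node coloring cost once enabled.

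Finally I would combine the two bounds, and the crux is showing that they \emph{add} rather than multiply. The key is a pipelining argument: while a depth-$d$ node sits in its Wait loop, re-checking every $\kappa_s$ slots, the serving and competition in the neighborhoods of its dominators proceed in parallel, so the baton is passed down each chain at an amortized cost of $\O(\Gamma^4\Delta\log n)$ per level, yielding $\O(\ell\,\Gamma^4\Delta\log n)$ along the critical path, while the aggregate serving and competition over all neighborhoods contributes a further $\O(\Delta\,\Gamma^4\Delta\log n)$; summing gives $\O((\Delta+\ell)\Gamma^4\Delta\log n)$. I expect the main obstacle to be exactly this additive combination: one must argue that the $\Delta$-cost of exhausting a leader's request queue is incurred only once per node over its whole lifetime rather than once at each of the $\ell$ levels of the domination DAG, which requires a charging scheme that attributes each serve-round and each competition loss to a distinct color advance, and then shows these charges total $\O((\Delta+\ell)\Gamma^4\Delta\log n)$ across the network.
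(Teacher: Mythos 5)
Your skeleton matches the paper's proof: message success via Corollary~\ref{cor:coloring-message-transmission-times}, the acyclicity of unidirectional links giving a depth at most $\ell$, and a sum of per-state runtimes for the rest. The genuine gap sits exactly at the point you flag as the crux. You bound the number of consecutive Compete invocations of a node by ``the number of competing neighbors'', i.e.\ by something of order $\Delta$. With that bound, one core-coloring execution carries an extra $\Delta$ factor beyond the $\Delta$ already inside $\kappa_s$, and since every level of a domination chain must be charged one full core-coloring execution, the wait phase becomes $\ell\cdot\Delta\cdot(\cdot)$ and overshoots the claimed bound --- which is precisely why you are then forced to invent a pipelining/charging argument. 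That argument is left entirely undeveloped, and it is doubtful it can be made to work: nothing in the algorithm prevents distinct nodes at distinct levels of a chain from each incurring their own request-serving and competition delays, so the ``$\Delta$-cost only once per lifetime'' claim has no mechanism behind it.

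The paper resolves this not by amortization but by a structural lemma you are missing: Lemma~\ref{lem:coloring-bounded-active-competing-nodes} (built on Lemma~\ref{lem:coloring-bounded-leaders-in-range}) shows that at most $38\Gamma^2$ nodes can ever simultaneously compete for the same color, because leaders serve requests one at a time with a listen period, so color assignments are staggered in time. Hence a node passes through at most $38\Gamma^2$ consecutive Compete states (Lemma~\ref{lemma:coloring-bounded-competition-states}), and the worst-case core-coloring execution costs $\O(\Gamma^4\kappa_s + \Delta\kappa_l)=\O(\Gamma^4\kappa_s)$, with no extra $\Delta$ factor. After that the combination is the plain sequential sum you were hoping to avoid having to justify: neighborhood learning once, plus $\ell$ core-coloring periods of waiting --- in each period the longest \emph{uncolored} directed path shrinks by at least one (Lemma~\ref{lem:bounding-dominated-time}) --- plus one final core-coloring execution; the product $\ell\cdot\O(\Gamma^4\kappa_s)$ already fits the bound, so no charging across levels is needed. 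Two smaller points: you never account for the neighborhood-learning protocol, which costs $(2\Delta+1)\kappa_s$ and is one source of the $\Delta$ term (Lemma~\ref{lem:coloring-neighborhood-learning}); and your per-level cost $\O(\Gamma^4\Delta\log n)$ is not achievable from the algorithm's parameters, since a single Compete$_i$ state already costs $(38\Gamma^2+3)\kappa_s=\O(\Gamma^4\Delta\log n)$ and there can be $38\Gamma^2$ of them --- the paper's own table in fact supports the $\Gamma^6$ bound announced in its abstract rather than the $\Gamma^4$ stated in the theorem, the latter coming from a refined analysis in the full version.
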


The proof follows from the lemmata stated in this section and a worst
case execution of the algorithm. Let us therefore consider such a
worst case. The nodes starts with executing the neighborhood learning
protocol. Afterwards it will be dominated for the maximal time. Then,
finally the will be able to start running the core coloring
algorithm. It will therefore enter Compete$_0$ state, fail to win and
hence enter Request state afterwards. After going through the maximum
number of Compete$_i$ states, it will finally win a competition and
move (through announce) to the coloring state. Summing over the
maximal runtime of the states shows the theorem.
\begin{table}
\centering
\caption{Runtime of the algorithm. CC stands for parts of the core coloring
  algorithm}
\label{tab:coloring-runtime} 
\begin{tabular}{|l|l|l|} 
\hline
\textbf{State} & \textbf{Runtime} & \textbf{Proof}\\ \hline
Neighborhood learning & $(2\Delta+1) \kappa_s)$ & Lemma~\ref{lem:coloring-neighborhood-learning} \\ \hline
Wait state & $\ell \cdot $ Coloring & Lemma~\ref{lem:bounding-dominated-time}\\ \hline 
CC: Compete$_0$ & $3\kappa_s + \Delta \kappa_l$ & Lemma~\ref{lem:coloring-compete-0}\\ \hline 
CC: Compete$_i$ & $(38\Gamma^2+3)\kappa_s$ & Lemma~\ref{lem:coloring-compete-i}\\ \hline 
CC: Max. Compete$_i$'s & $(38^2\Gamma^4 +120\Gamma^2)\kappa_s$ & Lemma~\ref{lemma:coloring-bounded-competition-states}\\ \hline 
CC: Request & $(38+4)\kappa_s + \Delta\kappa_l$ & Lemma~\ref{lem:coloring-request}\\ \hline 
CC: Announce & $\leq 2\kappa_s$ & - \\ \hline 
\end{tabular}
\end{table}
In the following we prove the results stated in
Table~\ref{tab:coloring-runtime}. In the next lemma the runtime of the initial
neighborhood learning protocol bounded from above.

\begin{lemma}
  \label{lem:coloring-neighborhood-learning}
  Let a node $v$ execute
  Algorithm~\ref{algo:neighborhood-learning}. After the execution,
  both $v$ and its neighbors $w_i$ know about their communication link
  and whether the link is bidirectional.  The algorithm finishes
  within $2\Delta+1 \kappa_s$ time slots.
\end{lemma}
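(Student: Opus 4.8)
The plan is to split the claim into two independent parts: a \emph{correctness} part, showing that after the handshake every relevant link incident to $v$ is classified correctly w.h.p., and a \emph{runtime} part, bounding the number of time slots $v$ spends. Throughout, the workhorse is Corollary~\ref{cor:coloring-message-transmission-times}: during neighborhood learning every node transmits with probability at most $p_s=\gamma/(2\Delta)$, so the sum of transmission probabilities inside any transmission region is at most $\Delta p_s = \gamma/2 \le \gamma$, and hence Theorem~\ref{thm:sum-of-rings-bounded-interference} together with Lemma~\ref{lem:1} guarantees that any fixed message broadcast for $\kappa_s$ slots reaches all of the sender's out-neighbors (its broadcasting region) w.h.p. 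First I would record this interference bound explicitly, since it licenses treating each individual transmission of the protocol as successful.

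For correctness I would trace the three-way handshake along a single ordered pair of nodes. Consider a link $v \to w$, i.e.\ $w$ lies in the broadcasting region of $v$. In the request phase $v$'s ID-carrying request reaches $w$ w.h.p., so $w$ learns the incoming link $v\to w$ (that is, $v\in\mathrm{In}(w)$). If additionally $w\to v$, then $w$'s reply reaches $v$ w.h.p.; since $w$ only replies after hearing $v$, receipt of the reply tells $v$ both that $w\in\mathrm{In}(v)$ and, because $w$ could only have heard the request if $v\to w$, that the link is bidirectional, so $w\in\mathrm{Out}(v)\cap\mathrm{In}(v)$. Finally $v$'s acknowledgment reaches $w$ w.h.p., so $w$ too certifies the link as bidirectional. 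The delicate point is the unidirectional case: if $w\not\to v$, then $w$ never receives an acknowledgment, and I would argue the contrapositive of the delivery guarantee---had the reverse link existed, reply and acknowledgment would both have arrived w.h.p.---so the \emph{absence} of an acknowledgment lets $w$ correctly conclude that it is dominated by $v$. This inference from a missing message is the main obstacle, as it is the only step that uses the w.h.p.\ guarantee in its negative form; handling it cleanly requires a union bound over the $\O(n\Delta)$ relevant (sender, receiver) pairs so that, after adjusting the constant $c$, \emph{all} transmissions behave as predicted simultaneously w.h.p.

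For the runtime I would simply sum the slots $v$ spends in each stage. The single request broadcast costs $\kappa_s$ slots. Node $v$ then issues one dedicated $\kappa_s$-slot transmission per neighbor it must answer: at most $\Delta$ replies to the requesters it can reach and at most $\Delta$ acknowledgments for the replies it receives. Both counts are bounded by $\Delta$ because every such neighbor lies within the broadcasting range $\R_v$, which is contained in the transmission range $\Rmax_v$ holding at most $\Delta$ nodes. Adding the three contributions yields $(1+\Delta+\Delta)\kappa_s=(2\Delta+1)\kappa_s$ slots, as claimed.
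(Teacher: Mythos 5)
Your correctness half is fine, and in fact more careful than what the paper records: the paper's own proof treats the w.h.p.\ delivery of each $\kappa_s$-slot transmission as given (via Corollary~\ref{cor:coloring-message-transmission-times}) and says nothing about the union bound or about how a node infers a \emph{unidirectional} link from a missing acknowledgment, both of which you handle sensibly.

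The gap is in the runtime half. The lemma asserts that the protocol \emph{finishes} within $(2\Delta+1)\kappa_s$ time slots, i.e., it is a bound on elapsed time, but what you actually bound is the number of slots in which $v$ itself transmits. These are not the same quantity, because $v$ can be forced to sit idle waiting for replies, and idle slots are not counted by your sum. The middle $\Delta\kappa_s$ term in the paper's argument is not ``$v$'s replies to the requesters it can reach''; it is the reply \emph{latency} caused by serialization at the neighbors: a neighbor $w_i$ that has received $v$'s request may have to serve up to $\Delta-1$ other pending requests first, so $v$'s reply arrives only within $\Delta\kappa_s$ slots after the request phase. Your accounting lands on the same number only by coincidence (both sides serialize over at most $\Delta$ items of length $\kappa_s$), but it does not bound completion time: if $v$ happens to receive few or no requests itself, your sum of $v$'s transmission slots is far below $(2\Delta+1)\kappa_s$, yet $v$ may still wait the full $\Delta\kappa_s$ for its busy neighbors to answer. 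The repair is exactly the paper's decomposition: $\kappa_s$ for $v$'s request phase, plus $\Delta\kappa_s$ until every neighbor has worked through its queue and replied to $v$, plus $\Delta\kappa_s$ for $v$ to serialize its at most $\Delta$ acknowledgments.
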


\begin{proof}
  As node $v$ transmits a neighborhood learning request to all its
  neighbors $w_i$, the neighbor answers within $\Delta \kappa_s$
  slots after receiving the request (he may serve at most $\Delta-1$
  other request in the meantime. If the $w_i$ reaches $v$, $v$
  receives the message and completes the three-way-handshake by
  acknowledging the reception of the message to $w_i$, again within at
  most $\Delta \kappa_s$ slots. This holds for all neighbors.
\end{proof}

After finishing the initialization, each node needs to wait until it
is no longer dominated. In the following we will argue that the core
coloring algorithm needs to run at most $\O(\ell)$ times before all
nodes are colored.

\begin{lemma}
  \label{lem:bounding-dominated-time}
  Each node $v$ that reached Line~\ref{algo1:main-loop} of
  Algorithm~\ref{algo:main} and is not dominated, will be colored
  after $\O(\Gamma^4\kappa_s)$ time slots.
\end{lemma}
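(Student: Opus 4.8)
The plan is to show that a non-dominated node $v$ that has reached the main loop will, within $\O(\Gamma^4 \kappa_s)$ time slots, traverse the core-coloring states (Compete, Request, Announce) and end up colored. The key is to bound the total time spent in the core coloring algorithm by a single node that is \emph{not} dominated, so that the overall wait bound $\ell \cdot (\text{Coloring time})$ from Table~\ref{tab:coloring-runtime} can later be assembled. I would read off the runtimes of the individual states directly from the table and sum them, justifying the $\Gamma^4$ factor as arising from the maximum number of Compete$_i$ iterations.

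First I would argue that once $v$ is not dominated (i.e., the guard on Line~\ref{algo:main:notdominated} of Algorithm~\ref{algo:main} is satisfied), $v$ either enters Compete$_0$ or immediately issues a Request to an available bidirectional leader. In the Compete$_0$ branch, $v$ spends $\O(\Delta \kappa_l + \kappa_s)$ time slots (Lemma~\ref{lem:coloring-compete-0}); upon failing it transitions to Request, costing $\O(\Delta \kappa_l + \kappa_s)$ more (Lemma~\ref{lem:coloring-request}). From Request, $v$ enters Compete$_i$ for some assigned color index $i$, and may iterate through at most $\O(\Gamma^4)$ competition states before winning, each costing $(38\Gamma^2+3)\kappa_s$ slots (Lemmata~\ref{lem:coloring-compete-i} and~\ref{lemma:coloring-bounded-competition-states}). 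The dominant term is therefore the $(38^2\Gamma^4 + 120\Gamma^2)\kappa_s = \O(\Gamma^4 \kappa_s)$ spent across all Compete$_i$ states. Finally, the Announce phase adds only $\O(\kappa_s)$ slots, after which $v$ is colored.

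Summing these contributions, the total time is
\begin{align*}
  \O(\Delta \kappa_l) + \O(\Gamma^4 \kappa_s) + \O(\kappa_s) = \O(\Gamma^4 \kappa_s),
\end{align*}
where I would use that $\kappa_l = 8c/p_l \log n$ with $p_l = \gamma/(18\Gamma^2)$ and $\kappa_s = 8c/p_s \log n$ with $p_s = \gamma/(2\Delta)$, so that $\Delta \kappa_l$ is of order $\Gamma^2 \log n / \gamma$ and is absorbed into the $\O(\Gamma^4 \kappa_s)$ term since $\kappa_s = \Theta(\Delta \log n / \gamma)$. The crux is verifying that the $\Gamma^4$ factor on the Compete states genuinely dominates, and that no state is re-entered more than the claimed number of times for a non-dominated node.

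The main obstacle I anticipate is establishing the bound on the number of Compete$_i$ iterations (the content of Lemma~\ref{lemma:coloring-bounded-competition-states}): one must argue that the color index $v$ competes for cannot increase without bound, because the number of competitors that can force $v$ to retry is geometrically controlled by the number of independent nodes in $v$'s transmission region, which is $\O(\Gamma^2 \Delta)$, and each retry advances the color index. Care is needed here since unidirectional links mean $v$ may lose to nodes it cannot hear symmetrically; the argument must use that $v$ only competes against nodes reaching $v$, and that the $\zeta_i$-separation in the color-assignment rule $\chi(P_v)$ prevents unbounded escalation. Once this combinatorial bound on competitions is in hand, the time bound follows by the straightforward summation above.
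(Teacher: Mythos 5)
Your proposal follows essentially the same route as the paper: the paper's proof simply reads the worst-case runtimes of the core-coloring states (Compete$_0$, Request, the consecutive Compete$_i$ states, Announce) off Table~\ref{tab:coloring-runtime} and sums them along the worst-case execution described after Theorem~\ref{thm:coloring-runtime}, with the consecutive competitions contributing the dominant $\O(\Gamma^4\kappa_s)$ term, exactly as you do; your explicit absorption of the $\Delta\kappa_l$ terms via $\Delta\kappa_l = \Theta(\Gamma^2\kappa_s)$ is a detail the paper leaves implicit. One slip to correct: the number of consecutive compete states is $38\Gamma^2 = \O(\Gamma^2)$, not $\O(\Gamma^4)$ (only the total time across them is $\O(\Gamma^4\kappa_s)$), and likewise the bound on simultaneous competitors for a color in Lemma~\ref{lem:coloring-bounded-active-competing-nodes} is $38\Gamma^2$, not $\O(\Gamma^2\Delta)$, which is what keeps the overall bound free of an extra $\Delta$ factor.
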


\begin{proof}
  The runtime of the different states of the core coloring algorithms
  are as depicted in Table~\ref{tab:coloring-runtime}. Let us now
  assume a node $v$ is not dominated and in the required loop. It will
  then start executing the core coloring algorithm.  The worst case
  runtime of the core coloring algorithm is $\O(\Gamma^4\kappa_s)$
  time slots. This follows from the
  argumentation after Theorem~\ref{thm:coloring-runtime}.
\end{proof}

Specifically, the lemma implies that once a node reached
Line~\ref{algo1:main-loop} of Algorithm~\ref{algo:main}, it will be
colored after at most $\O(\ell \Gamma^4\kappa_s)$ time slots. This
holds since initially the length of the longest directed chain of
dominating nodes is $\ell$. Due to
Lemma~\ref{lem:bounding-dominated-time}, the length of the longest
\emph{uncolored} directed path is at most $\ell -1$ after
$\O(\Gamma^4\kappa_s)$ time slots.  After repeating this procedure for
$\ell$ times, the length of the longest uncolored directed path is $0$
and hence there are no dominated nodes.  Thus after one more execution
of the core coloring algorithms all nodes are colored.  Let us now
consider the states of the core-coloring algorithm. We begin with the
compete states for leader and non-leader nodes. 

\begin{lemma}
  \label{lem:coloring-compete-0}
  Let $v$ be a node entering the Compete$_0$ state. At most $3\kappa_s
  + \Delta\kappa_l$ slots after entering Compete$_0$, $v$ leaves the
  state.
\end{lemma}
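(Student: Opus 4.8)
The plan is to decompose the time $v$ spends in $\mathrm{Compete}_0$ into the \textbf{for}-loop of Algorithm~\ref{algo:compete} (lines 4--8) and the subsequent \textbf{while}-loop, and to bound each part separately. The \textbf{for}-loop runs for exactly $\kappa_s$ slots by construction, so it contributes at most $\kappa_s$ slots; during it $v$ either leaves early to the Request state upon receiving some $M_C^0(w)$ (in which case we are done, and earlier than claimed) or falls through to line 9, where it sets $c_v = \chi(P_v)$. Since $\zeta_0 = \kappa_l$ for $i=0$ and $v$ leaves $\mathrm{Compete}_0$ either by hearing an $M_C^0$ (transition to Request) or by reaching the winning condition $c_v > \kappa_s$ (transition to Announce), and since the former only makes $v$ exit earlier, it suffices to bound the time until $v$ itself crosses $\kappa_s$ in the \textbf{while}-loop, which I will argue is at most $2\kappa_s + \Delta\kappa_l$ further slots.

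First I would lower-bound the starting value $\chi(P_v)$. As $v$ has at most $\Delta$ nodes in its transmission region we have $|P_v| \le \Delta$, and each $w \in P_v$ forbids the $2\kappa_l+1$ integers $\{d_v(w)-\kappa_l,\dots,d_v(w)+\kappa_l\}$. Their union contains $\O(\Delta\kappa_l)$ integers, so a pigeonhole argument over $\{0,-1,-2,\dots\}$ gives that the maximal admissible value $\le 0$ satisfies $\chi(P_v) = -\O(\Delta\kappa_l)$; this is exactly where the $\Delta\kappa_l$ term of the bound originates. Because $c_v$ is incremented by one in every iteration (line 11), a reset-free climb from $\chi(P_v)$ up to $\kappa_s+1$ takes at most $\kappa_s + \Delta\kappa_l$ slots.

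The crux is controlling the resets in line 20, which can push $c_v$ back down to $\chi(P_v)$. The key observation is that a reset fires only when $v$ receives $M_A^0(w,c_w)$ with $|c_v-c_w|\le\kappa_l$, whereas $\chi(P_v)$ always repositions $c_v$ at distance strictly greater than $\kappa_l$ from every competitor currently in $P_v$. Since $c_v$ and each $d_v(w)$ advance by exactly one per slot, this separation is preserved, so a competitor already known to $v$ can never trigger a second reset; hence every reset corresponds to $v$ hearing a \emph{previously unknown} competitor for the first time. By Corollary~\ref{cor:coloring-message-transmission-times} every transmitting competitor reaches $v$ within $\kappa_s$ slots w.h.p., and the listening phase of Algorithm~\ref{algo:main} (the $(38\Gamma^2+3)\kappa_s$ wait preceding competition) ensures that no new competitor appears after $v$ begins. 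Thus all resets occur within the first $\kappa_s$ slots of the \textbf{while}-loop, after which $c_v$ climbs monotonically. Summing the $\kappa_s$ \textbf{for}-loop slots, the at most $\kappa_s$ ``settling'' slots, and the at most $\kappa_s + \Delta\kappa_l$ clean-climb slots yields the claimed $3\kappa_s + \Delta\kappa_l$ bound.

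The main obstacle I anticipate is precisely this reset-bounding step: making rigorous that no new competitor can appear after the first $\kappa_s$ slots of the \textbf{while}-loop. This is a statement about the asynchronous wake-up behaviour and relies on the synchronization established by the preliminary listening phase, which is treated separately in Section~\ref{sec:asynchr-node-wakeup}. Everything else---the fixed-length \textbf{for}-loop, the pigeonhole lower bound on $\chi(P_v)$, and the monotone climb---is routine once the reception guarantees of Corollary~\ref{cor:coloring-message-transmission-times} are invoked.
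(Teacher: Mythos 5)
Your accounting ($\kappa_s$ for the listen loop, $\kappa_s$ of ``settling'', $\kappa_s + \Delta\kappa_l$ for the final climb) matches the paper's, and your pigeonhole bound $\chi(P_v) \geq -\O(\Delta\kappa_l)$ is essentially Lemma~\ref{lem:coloring-counter-bounds}. But the crux step---why resets stop after the first $\kappa_s$ slots of the \textbf{while}-loop---is wrong in your version. You argue from $v$'s \emph{receptions}: once $v$ has heard every competitor, the separation $|c_v - d_v(w)| > \zeta_0$ is preserved because both counters advance in lockstep, so no known competitor can reset $v$ again. This fails because $d_v(w)$ is only $v$'s local estimate of $w$'s counter: if $w$ is itself reset by a third competitor, $w$'s true counter jumps down to $\chi(P_w)$, which need not avoid $c_v$ (it avoids $v$'s trajectory only if $w$ has heard $v$), and $w$'s next successful transmission to $v$ then triggers a second reset of $v$ by a ``known'' competitor. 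Your other pillar---that the $(38\Gamma^2+3)\kappa_s$ listening phase guarantees no new competitor appears after $v$ begins---is also unjustified: under asynchronous wake-up, and because dominated nodes are released into the main loop over time, neighbors may enter Compete$_0$ at any moment; the listening phase only forces \emph{them} to listen before announcing, it does not freeze the set of competitors around $v$.

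The paper closes exactly this hole with the dual argument, based on $v$'s \emph{transmissions} rather than its receptions: by Corollary~\ref{cor:coloring-message-transmission-times}, within $\kappa_s$ slots of entering the \textbf{while}-loop $v$ successfully delivers its counter to all neighbors, and by Lemma~\ref{lem:coloring-compete-no-reset} a node that has once announced its counter value to its neighborhood can never be reset again---any competitor that subsequently picks or resets its own counter does so via $\chi(\cdot)$ over a set containing $v$, hence avoids $v$'s trajectory. This is robust both against known competitors being reset and against competitors arriving late, which your reception-based argument is not. Combined with the lower bound $c_v \geq -\Delta\kappa_l$ at the time of the last possible reset (Lemma~\ref{lem:coloring-counter-bounds}), the climb to $\kappa_s$ takes at most $\kappa_s + \Delta\kappa_l$ further slots, giving $3\kappa_s + \Delta\kappa_l$ in total. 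Replace your settling argument by the transmission-based no-reset lemma and the rest of your proof goes through.
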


\begin{proof}
  There are two cases. Either $v$ wins the competition and will become
  leader or loses and enters the request state afterwards. In both
  cases the initial listen stage takes $\kappa_s$ slots. However, as
  soon as $v$ transmits once (which is after at most another
  $\kappa_s$ slots), he cannot be reset anymore according to
  Lemma~\ref{lem:coloring-compete-no-reset}. Hence either $c_v$
  reaches $\kappa_s$ and $v$ becomes leader or another node reaches
  $\kappa_s$ first (with sufficient time before $c_v$ reaches
  $\kappa_s$) and hence forces $v$ in the request state.  As the
  counter $c_v$ may at worst be reset to $\Delta\kappa_l$, the overall
  runtime of state Compete$_0$ is $3\kappa_l + \Delta \kappa_s$.
\end{proof}

\begin{lemma}
  \label{lem:coloring-compete-i}
  Let $v$ be a node entering the Compete$_i$ state. At most
  $(38\Gamma^2+3)\kappa_s$ slots after entering Compete$_i$, $v$
  leaves the state.
\end{lemma}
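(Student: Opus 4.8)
The plan is to bound the time in $\text{Compete}_i$ (for $i \geq 1$, so that $\zeta_i = \kappa_s$ and $\text{Next} = \text{Compete}_{i+1}$) by separately accounting for the initial listening phase, the time until $v$ first transmits, and the time the counter $c_v$ needs to climb from its (possibly negative) value $\chi(P_v)$ up to the exit threshold $\kappa_s$. On entry, $v$ listens for $\kappa_s$ slots (the \texttt{for}-loop of Algorithm~\ref{algo:compete}), during which it only collects into $P_v$ the competitors $w$ for color $i$ whose messages $M_A^i(w,c_w)$ it hears; this contributes the first $\kappa_s$. It then sets $c_v = \chi(P_v) \leq 0$ and enters the \texttt{while}-loop, leaving the state either when $c_v$ exceeds $\kappa_s$ (transition to $\text{Announce}_i$) or upon receiving some $M_C^i(w)$ (transition to $\text{Compete}_{i+1}$). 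In both cases the exit occurs no later than the moment $c_v$ first exceeds $\kappa_s$, so it suffices to bound the number of slots needed for this climb.

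The two quantities I must control are the depth $|\chi(P_v)|$ to which the counter can be pushed and the number of times it can be reset during the climb. For the reset count I would invoke the no-reset lemma (Lemma~\ref{lem:coloring-compete-no-reset}) exactly as in the proof of Lemma~\ref{lem:coloring-compete-0}: once $v$ has transmitted $M_A^i(v,c_v)$ even once, the competing nodes in its reception region learn $c_v$ and shift their own counters away from it, so $v$ is never reset again. Since $v$ transmits with probability $p_s$ in every loop iteration, it transmits within the first $\kappa_s$ loop slots w.h.p.; these pre-transmission slots add at most a further $\kappa_s$, and after the first transmission the climb to $\kappa_s$ is monotone, taking at most $|\chi(P_v)| + \kappa_s$ slots.

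It remains to bound $|\chi(P_v)|$. By the definition of $\chi(P_v)$ as the largest non-positive integer avoiding the forbidden window $\{d_v(w)-\zeta_i,\dots,d_v(w)+\zeta_i\}$ of every $w \in P_v$, the depth is at most the number of competitors times $\zeta_i = \kappa_s$, mirroring the reset-to-$\Delta\kappa_l$ bound used for $\text{Compete}_0$. Hence everything reduces to showing that the number of nodes competing for color $i$ that $v$ can hear is at most $38\Gamma^2$. I would establish this by a packing argument in the spirit of Lemma~\ref{lem:coloring-leaders-bounded}: every such competitor lies within $\Rmax$ of $v$ and is associated with a nearby leader, and leaders are pairwise separated by at least $\Rmin$, so a disk-packing count over the region of radius $\approx 2\Rmax$ around $v$ yields the $\O(\Gamma^2)$ bound, with the constant working out to exactly $38\Gamma^2$ (which is also why the color intervals in Algorithm~\ref{algo:colored} have width $38\Gamma^2$). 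Combining the pieces gives a total of $\kappa_s$ (listen) $+\ \kappa_s$ (until first transmission) $+\ (|\chi(P_v)| + \kappa_s) \leq (38\Gamma^2 + 3)\kappa_s$ slots.

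I expect the main obstacle to be the competitor count rather than the counter arithmetic: because two nodes simultaneously competing for the same color $i$ need not be independent (indeed, competition is precisely how conflicts among would-be same-colored neighbors are resolved), the packing argument cannot rely on independence of the competitors directly and must instead charge each competitor to a distinct leader in the packing region and invoke the leader-packing bound of Lemma~\ref{lem:coloring-leaders-bounded}. Verifying that the distinct competitors heard by $v$ map to distinct leaders within this region, and that re-assignments by a single leader do not inflate $|P_v|$ beyond this count within one episode of $\text{Compete}_i$, is the delicate step; the remainder is bookkeeping identical in form to the $\text{Compete}_0$ analysis.
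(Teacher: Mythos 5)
Your overall skeleton matches the paper's intended argument: a listen phase of $\kappa_s$ slots, a successful transmission within a further $\kappa_s$ slots after which Lemma~\ref{lem:coloring-compete-no-reset} forbids resets, and then a monotone climb of $c_v$ to the threshold $\kappa_s$. The gap is in how you bound the depth from which the counter must climb. The paper does not re-derive this quantity: it invokes Lemma~\ref{lem:coloring-counter-bounds}, which states $c_v \geq -(38\Gamma^2)\kappa_s$ in state Compete$_i$ for $i>0$ (the missing square in that lemma's statement is a typo), imported from Lemma~5 of \cite{dt-dncsi-10}. With that bound the total is $\kappa_s + \kappa_s + (38\Gamma^2\kappa_s + \kappa_s) = (38\Gamma^2+3)\kappa_s$, and the proof is a two-line analogy with Lemma~\ref{lem:coloring-compete-0}, which is exactly what the paper writes. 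You overlook this lemma and instead try to rebuild the depth bound from a competitor count, and that route has two concrete problems.

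First, the arithmetic: the forbidden window of each $w \in P_v$ is $\{d_v(w)-\zeta_i,\dots,d_v(w)+\zeta_i\}$, of width $2\zeta_i+1$, so a covering argument yields $|\chi(P_v)| \leq |P_v|\,(2\zeta_i+1) \approx 2|P_v|\kappa_s$; with your claimed competitor bound $|P_v| \leq 38\Gamma^2$ this gives roughly $76\Gamma^2\kappa_s$, overshooting the target (to land on $38\Gamma^2\kappa_s$ you would need at most $19\Gamma^2$ competitors --- the leader-packing constant of Lemma~\ref{lem:coloring-bounded-leaders-in-range} --- times a window of width $2\kappa_s$). Second, and more seriously, the competitor bound itself: as you yourself suspect, charging each competitor to a distinct leader fails, because a single leader re-assigns the same color interval to successive requesters (see Algorithm~\ref{algo:colored}), so the map from competitors to leaders is not injective. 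The paper's bound of $38\Gamma^2$ competitors for a fixed color, Lemma~\ref{lem:coloring-bounded-active-competing-nodes}, is proved by a temporal argument that explicitly presupposes an upper bound $T$ on ``the time it takes to compete for one color'' --- that is, it presupposes the very lemma you are proving. Using it (or any argument of its shape) here would be circular; in the paper the dependencies run the other way: Lemma~\ref{lem:coloring-counter-bounds} $\Rightarrow$ Lemmas~\ref{lem:coloring-compete-0} and~\ref{lem:coloring-compete-i} $\Rightarrow$ Lemma~\ref{lem:coloring-bounded-active-competing-nodes} $\Rightarrow$ Lemma~\ref{lemma:coloring-bounded-competition-states}. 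A smaller slip: for Lemma~\ref{lem:coloring-compete-no-reset} to apply, $v$ needs a \emph{successful} local broadcast of its counter value, not merely to have transmitted once; this is what Corollary~\ref{cor:coloring-message-transmission-times} supplies w.h.p.\ for $\kappa_s$ slots, and it is the reason the second $\kappa_s$ term appears.
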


\begin{proof}
  The Lemma follows from an argumentation analog to that of
  Lemma~\ref{lem:coloring-compete-0}.
\end{proof}

If a non-leader fails to verify the color $i$ it got assigned from its
leader, it tries to verify $i+1$ and so on. Thus non-leader nodes may
be in more than one consecutive compete states. We will now bound the
number of consecutive compete states a node may be forced to visit
before being able to verify a color.

\begin{lemma}
  \label{lemma:coloring-bounded-competition-states}
  A node can only be in $38\Gamma^2$ consecutive compete state and
  leaves the last compete state at most
  $38^2\Gamma^4+114\Gamma^2+6)\kappa_s$ slots after entering the first..
\end{lemma}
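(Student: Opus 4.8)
The plan is to combine a combinatorial bound on how many times $v$ can fail to verify a color with the per-state runtime bound from Lemma~\ref{lem:coloring-compete-i}. First I would fix the block of $38\Gamma^2$ candidate colors that $v$'s leader assigns to it: the spacing $j = \text{serveCount}\cdot 38\Gamma^2$ in Algorithm~\ref{algo:colored} guarantees that distinct slaves of the same leader receive \emph{disjoint} color intervals of this width. A node entering the first $\text{Compete}_i$ state attempts to verify the lowest color of this interval, and every failed verification advances it to the next $\text{Compete}_{i+1}$ state with the target color incremented by one. Hence the number of consecutive compete states equals one plus the number of colors inside the assigned interval that $v$ is forced to abandon, and it suffices to show that $v$ abandons at most $38\Gamma^2-1$ of them.

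The main step is this bound on abandoned colors. I would argue that $v$ gives up a color $c$ only when some node $w$ possessing an incoming communication link to $v$ occupies the guard band $\{d_v(w)-\zeta_i,\dots,d_v(w)+\zeta_i\}$ around $c$ — this is exactly what the choice $c_v = \chi(P_v)$ avoids — and that every such $w$ is a slave of some leader that reaches $v$. By Lemma~\ref{lem:coloring-leaders-bounded} at most $9\Gamma^2$ distinct leaders reach $v$, and since each leader hands out its slaves disjoint intervals spaced by $38\Gamma^2$, each such leader can place only a constant number of slaves into $v$'s interval whose colors also reach $v$; bounding these contributions keeps the number of blocked colors strictly below $38\Gamma^2$. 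The step I expect to be the main obstacle is reconciling the counter-separation mechanism of $\chi(P_v)$ (the guard band $\zeta_i=\kappa_s$ lives in counter units) with the color-interval width $38\Gamma^2$, and then turning the "distinct leaders reaching $v$" count into a clean bound on the total number of reserved colors. Once this is done, $v$ must find a verifiable color inside its own interval and therefore leaves the compete sequence after at most $38\Gamma^2$ states.

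Given the bound of $38\Gamma^2$ on the number of consecutive compete states, the runtime follows by summation. By Lemma~\ref{lem:coloring-compete-i} each $\text{Compete}_i$ state is left within $(38\Gamma^2+3)\kappa_s$ slots, so the total time from entering the first compete state until leaving the last is at most
\begin{align*}
  38\Gamma^2\cdot(38\Gamma^2+3)\kappa_s + 6\kappa_s = (38^2\Gamma^4 + 114\Gamma^2 + 6)\kappa_s,
\end{align*}
where the leading terms come directly from the product $38\Gamma^2\cdot(38\Gamma^2+3)$ and the additive $6\kappa_s$ absorbs the constant transition overhead between the request hand-off and the compete sequence. This matches the claimed bound.
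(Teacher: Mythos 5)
Your runtime summation at the end is fine and matches the claimed bound, but the core counting step has two genuine problems, one of which you yourself flag as the ``main obstacle'' --- and it is not a technicality, it is the entire content of the lemma. First, your premise that every blocking node $w$ ``is a slave of some leader that reaches $v$'' is not justified: $w$ reaches $v$, and $w$'s leader reaches $w$, but that leader may sit at distance up to $2\Rmax$ from $v$ and need not reach $v$ at all. This is exactly why the paper does not use Lemma~\ref{lem:coloring-leaders-bounded} (the $9\Gamma^2$ bound, which is moreover stated for a \emph{leader} node $v$, not for a non-leader requester) but instead proves a separate geometric bound, Lemma~\ref{lem:coloring-bounded-leaders-in-range}, giving $19\Gamma^2$ leaders within distance $2\Rmax$ of $v$; the factor $38\Gamma^2 = 2\cdot 19\Gamma^2$ in the statement comes from that constant, and your $9\Gamma^2$-based count cannot reproduce it.

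Second, and more fundamentally, your argument that ``each leader can place only a constant number of slaves into $v$'s interval'' is circular. A slave of another leader conflicts with $v$ on a color inside $v$'s interval only if that slave has drifted beyond the interval its own leader assigned it, i.e., only if it has visited more consecutive compete states than its interval width --- which is precisely the statement being proved, applied to that slave. The paper breaks this circularity with a temporal first-violation argument (Lemma~\ref{lem:coloring-bounded-active-competing-nodes}): consider the \emph{first} time slot $t$ at which some node competes with more than $38\Gamma^2$ nodes for the same color; before $t$ no such overflow has occurred, so every competitor finishes each color within a bounded time $T$, and since the at most $19\Gamma^2$ leaders within $2\Rmax$ of $v$ can each hand out a given color at a bounded rate (one assignment per serving period, preceded by a listen period), at most $38\Gamma^2$ nodes can hold that color assignment within the relevant window --- a contradiction. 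Without this (or some equivalent induction over time), your static, per-interval counting does not close.
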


\begin{proof}
  Let us consider a node $v$ that got color $j$ assigned by its
  leader. It hold that the node will try to verify $j$ or a
  consecutive color, until it wins a competition and enters the
  announce state.  Let us consider the number of nodes that could
  force $v$ to move on to the next color.  By
  Lemma~\ref{lem:coloring-bounded-active-competing-nodes}, this number
  is upper-bound by $38\Gamma^2$. Hence after at most $38 \Gamma^2$
  consecutive compete states all nodes that may compete with $v$ for
  the same color are colored and hence $v$ succeeds in the following
  competition round.
\end{proof}
After proving the bound in the runtime of the compete states, let us
consider the request state.
\begin{lemma}
  \label{lem:coloring-request}
  A node $v$ that enters the request state leaves the request state at
  most $(38+4)\kappa_s + \Delta\kappa_l$ time slots afterwards.
\end{lemma}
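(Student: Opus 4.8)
The plan is to follow the same two-phase decomposition used in the companion lemmas (e.g. Lemma~\ref{lem:coloring-compete-0}) and read the runtime directly off Algorithm~\ref{algo:request}, charging each line against the communication guarantees of Corollary~\ref{cor:coloring-message-transmission-times}. The request state has two conceptual parts: first $v$ must get its request $M_R^v(w)$ registered at its leader $w$, and second $v$ must wait until $w$ actually serves it a color interval $j$. I would bound each part separately and then sum, showing the total is at most $(38+4)\kappa_s + \Delta\kappa_l$.

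For the first part I would argue that $v$ transmits $M_R^v(w)$ with probability $p_s$ for $\kappa_s$ slots, and that by Corollary~\ref{cor:coloring-message-transmission-times} this message reaches $w$ with high probability within those $\kappa_s$ slots; upon reception $w$ appends $v$ to its service queue $Q$ (the line $Q = Q.\text{add}(w)$ in Algorithm~\ref{algo:colored}). The second, dominant part is the waiting time until $w$ serves $v$. Here I would use that $w$ processes its queue one entry at a time, each service consisting of transmitting $M_S^{\text{current}}(j)$ with probability $p_l$ for $\kappa_l$ slots, which again reaches the target requester with high probability by the corollary. Since $w$ has at most $\Delta$ neighbours, at most $\Delta$ requests can ever sit ahead of $v$ in $Q$, so $v$ is served within $\Delta\kappa_l$ slots of being enqueued. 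The remaining constant-times-$\kappa_s$ overhead I would attribute to the synchronization of the serving cycle dictated by the $\text{count} > t_1$ and $\text{count} > t_1 + \kappa_l$ branches in Algorithm~\ref{algo:colored} (the leader alternates a length-$\kappa_l$ serving window with its listening/announcement slots) together with the short listening windows $v$ itself incurs while waiting to decode the assignment.

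The main obstacle will be controlling the second part rigorously: I must rule out that $v$'s position in the queue grows beyond $\Delta$ and that the cyclic structure of the serving loop does not force $v$ to wait through more than $O(\kappa_s)$ of idle/announcement slots per served predecessor, all under asynchronous wake-up where $v$'s request may arrive at an arbitrary phase of $w$'s loop. The clean way to handle this is to note that once $v$ is in $Q$ it is never removed until served, that the service discipline is essentially FIFO so no neighbour can overtake $v$ more than once, and that each full service window is $\kappa_l$ slots with only an additive $O(\kappa_s)$ gap before the next, so the cumulative idle contribution over the at most $\Delta$ predecessors collapses into the stated $(38+4)\kappa_s$ term. With both parts bounded, adding the request-transmission cost $\kappa_s$, the queue wait $\Delta\kappa_l$, and the synchronization overhead yields the claimed bound $(38+4)\kappa_s + \Delta\kappa_l$, matching the entry in Table~\ref{tab:coloring-runtime}.
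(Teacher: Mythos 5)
There is a genuine gap in your accounting of the dominant $\kappa_s$-term. The paper's proof charges the $(38+4)\kappa_s$ part of the bound to a specific, one-time event: the leader $w$ may itself be \emph{newly colored} and still sitting in its initial not-yet-serving-requests phase of Algorithm~\ref{algo:colored} (the phase where $\text{count} \leq t_1$, with $t_1$ defined as the number of rounds a newly colored node must wait before serving color requests; the paper's proof bounds this waiting by $(38\Gamma^2+3)$ rounds, tied to the listen period invoked in Lemma~\ref{lem:coloring-bounded-active-competing-nodes}). Your proof never accounts for this: if $v$'s request arrives just after $w$ became colored, $w$ serves \emph{nobody} --- not even the first entry of $Q$ --- until that initial period elapses, and without bounding it your decomposition ($\kappa_s$ for the request plus $\Delta\kappa_l$ for queue predecessors) simply does not cover the worst case.

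Your substitute explanation for the constant term --- ``synchronization of the serving cycle,'' i.e.\ an idle gap between consecutive serving windows, summed over the at most $\Delta$ predecessors --- does not match the structure of Algorithm~\ref{algo:colored} and would not yield the stated constant anyway. Once past the initial phase, the leader resets $\text{count}$ to $t_1$ at the end of each $\kappa_l$-window and immediately takes $Q$.first$()$, so serving windows follow essentially back-to-back; there is no per-predecessor $O(\kappa_s)$ gap. Conversely, if there \emph{were} such a gap per predecessor, the total would be $\Delta \cdot O(\kappa_s)$, not a $\Delta$-independent $(38+4)\kappa_s$, so the claimed ``collapse'' of $\Delta$ idle contributions into a constant is not justified. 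The fix is to replace that part of your argument with the paper's: bound the residual time of the leader's initial not-yet-serving phase (via $t_1$ and the listen-period bound), then add the $\kappa_s$ request transmission and the $\Delta\kappa_l$ for serving at most $\Delta$ queued requests. (You may also note, as a separate matter, that the paper itself is inconsistent here: the lemma states $(38+4)\kappa_s$ while its proof uses $(38\Gamma^2+3)$, suggesting a dropped $\Gamma^2$ factor in the statement; but that is an issue with the paper, not a repair of your argument.)
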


\begin{proof}
  The node $v$ first sends it's request in $\kappa_s$ slots, and
  subsequently will be served by its leader. As the leader may still
  be in the initial not-yet-serving-requests phase (see
  Algorithm~\ref{algo:colored}), it may require up to $(38\Gamma^2 +
  3)$ slots until the leader starts serving the requests. As the 
  leader can have at most $\Delta$ request, $v$ will be served at most
  $\Delta\kappa_l$ slots later.
\end{proof}

\subsubsection{Correctness of the algorithm}
\label{sec:coloring-algo-correct}

In order to ensure the correctness of the algorithm it remains to show
that the algorithm indeed computes a valid node coloring with at most
$\O(\Gamma^2\Delta)$ colors. 

\begin{theorem}
  \label{thm:coloring-colors-valid}
  The coloring algorithm (Algorithm~\ref{algo:main}) computes a
  coloring with $\O(\Gamma^2 \Delta)$ colors such that each color forms an
  independent set.
\end{theorem}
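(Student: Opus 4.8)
The plan is to prove Theorem~\ref{thm:coloring-colors-valid} in two parts: first establishing that the coloring produced is \emph{valid} (each color class is independent), and second bounding the total number of colors by $\O(\Gamma^2\Delta)$. For validity, the key observation is that a node $v$ only finalizes a color after passing through the announce phase, which by Corollary~\ref{cor:coloring-message-transmission-times} guarantees that all nodes reachable from $v$ learn of $v$'s choice w.h.p. I would argue that whenever two nodes $u,v$ share a communication link, at least one direction is effective, and the verification mechanism in Compete$_i$ (comparing $c_v$ against received $c_w$ within the guard band $\zeta_i$) forces conflicting nodes to separate their colors. The crucial subtlety is the directed-link structure: because transmission ranges decrease monotonically along any directed path (as established in the paragraph defining $\ell$), if $v$ dominates $u$ then $v$ reaches $u$, so $u$ hears $v$'s announcement and will avoid $v$'s color. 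Thus no bidirectional pair can end up monochromatic, giving validity.

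For the color bound, I would separate leader colors from non-leader colors. By Lemma~\ref{lem:coloring-leaders-bounded}, any transmission region contains at most $9\Gamma^2$ leaders, so the leader-color palette has size $\O(\Gamma^2)$; the algorithm reserves indices $i \le 9\Gamma^2+1$ for leaders (cf. the guard \texttt{$i \le 9\Gamma^2+1$} in Algorithm~\ref{algo:colored}). For non-leader colors, the essential ingredient is Lemma~\ref{lemma:coloring-bounded-competition-states}: a requesting node passes through at most $38\Gamma^2$ consecutive Compete$_i$ states before succeeding, meaning it increments its assigned color at most $38\Gamma^2$ times. Combined with the leader serving out color intervals of width $38\Gamma^2$ (the \texttt{$j = \text{serveCount} \cdot 38\Gamma^2$} line in Algorithm~\ref{algo:colored}), each leader serves its at-most-$\Delta$ dominated children using $\O(\Gamma^2 \Delta)$ distinct colors. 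Since every colored node either is a leader (using one of $\O(\Gamma^2)$ colors) or receives a color from its unique leader within that leader's $\O(\Gamma^2\Delta)$-sized interval range, the total number of colors is $\O(\Gamma^2\Delta)$.

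The main obstacle I anticipate is the interaction between the non-uniform, \emph{directed} communication graph and the classical MW-coloring correctness argument, which was designed for the bidirectional (unit-disk) setting. Specifically, I must carefully verify that a node which is dominated does not prematurely commit to a color that conflicts with a dominating node still in the coloring process. This is exactly why the Wait state in Algorithm~\ref{algo:main} blocks dominated nodes (Line~\ref{algo:main:notdominated}), and why Lemma~\ref{lem:bounding-dominated-time} peels off the longest directed path one layer at a time. I would therefore need to show that the ordering induced by domination is consistent with the color-assignment order, so that when $v$ finally competes, every node that could force a conflict with $v$ and is \emph{not} dominated by $v$ has already been colored and announced. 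The monotonicity of transmission ranges along directed paths is the lever that makes this work, since it rules out directed cycles and guarantees a well-defined topological structure on the unidirectional part of the graph.

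I would close by noting that the validity argument and the counting argument together yield the theorem: each color class is independent w.h.p.\ by the announcement/verification mechanism combined with Corollary~\ref{cor:coloring-message-transmission-times}, and the palette size is $\O(\Gamma^2\Delta)$ by the leader bound of Lemma~\ref{lem:coloring-leaders-bounded} and the consecutive-compete bound of Lemma~\ref{lemma:coloring-bounded-competition-states}.
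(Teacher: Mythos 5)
Your proposal is correct and follows essentially the same route as the paper's own proof: the same two-part decomposition into validity (via the announcement/verification mechanism, with one node finishing first and forcing same-colored neighbors away) and palette counting (leader colors bounded by Lemma~\ref{lem:coloring-leaders-bounded}, non-leader colors bounded by at most $\Delta$ requests per leader times the $38\Gamma^2$ consecutive-compete bound of Lemma~\ref{lemma:coloring-bounded-competition-states}). Your additional explicit treatment of unidirectional links via the range-monotonicity of directed paths is handled in the paper by the Wait-state design and the asynchronous wake-up discussion rather than inside the theorem's proof, but this is a difference of presentation, not of substance.
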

We will show the theorem in two steps. We will first show that indeed
each color forms an independent set and afterwards bound the number
colors used by the algorithm.
\begin{proof}
  Let us consider two nodes $u$ and $v$ that are colored with the same
  color $i$.  Let us first assume there is a bidirectional
  communication link between $u$ and $v$. If $u$ and $v$ competed for
  $i$ at the same time, $u$ and $v$ cannot finish within less than
  $\kappa_l$ (or $\kappa_s$) time slots. Thus let us assume $v$
  finished before $u$. Then, $v$ was able to announce $i$ to
  $u$ and force $u$ to move to another color or the request state.  If
  $u$ and $v$ did not compete at the same time, let $v$ be the node
  colored earlier. Again, $v$ was able to communicate to $u$ that
  it is colored with $i$ and thus prevented $u$ from verifying
  $i$. Note that at least once $v$ reached $u$ less than
  $\kappa_s$ time slots before $u$ finishes, hence $i$ is in
  C$_\text{taken}$ of $u$.
\end{proof}
Let us now bound the number of colors
\begin{proof}
  As $9\Gamma^2$ is an upper bound on the number of other leader nodes
  that can be in the transmission range of a leader node, this is the
  maximal number of colors that can be blocked when a leader node
  selects it's color. Hence $9\Gamma^2+1$ leader colors are
  sufficient.  The number of non-leader colors is bound by the number
  of requests a leader may have to serve in the worst case. This is
  obviously $\Delta$ as bidirectional communication is required. Due
  to Lemma~\ref{lemma:coloring-bounded-competition-states} it holds
  that for each request at most $38 \Gamma^2$ consecutive colors are
  required. After noticing that $38 \Gamma^2$ is the first non-leader
  color that is assigned it holds that at most $38 \Gamma^2 (\Delta+1)$
  non-leader colors are used by the algorithm.
\end{proof}

\subsubsection{Asynchronous node wakeup}
\label{sec:asynchr-node-wakeup}

Let us now briefly consider the asynchronous wake-up of nodes.  In
order to allow nodes to start after other nodes finished the
neighborhood learning protocol, we allow both algorithms to run in
parallel by requiring each node to reserve every second round of
``local broadcasting'' for answering possible neighborhood learning
requests. This requires to account twice the number of time slots for
each transmission, as well as each request. This doubles the runtime
of the algorithm, but enables the algorithm to cope with asynchronous
node wakeup.

We assume that two nodes that currently execute the core-coloring algorithm do
not have an unidirectional link. However, such an unidirectional link might be
introduced due to an awaking node.  To prevent this, we require nodes
that are not yet colored to stop executing the core-coloring algorithm
and return to the main loop of Algorithm~\ref{algo:main} immediately
if they get dominated. Note that colored leaders need to store which
colors they assigned to which nodes and reuse them accordingly in
order to ensure the bound on the number of colors in the previous
section. Note that if a node $v$ is already colored it is not required to
stop running the core coloring algorithm. However, as a node that has
a unidirectional communication link to $v$ selects the same color as
$v$, $v$ needs to resign from its color. 

Thus if a node $v$ that is colored with color $i$ receives an
announcement from node $w$ that $w$ will take color $i$, $v$ finishes
serving its requests and then resigns from the color and enters the
main loop of Algorithm~\ref{algo:main}. Note that $v$ cannot be in the
initial phase in which it is not yet allowed to serve
requests. Otherwise $w$ would have been in the core coloring algorithm
at the same time as $v$ and dominated $v$. Hence $v$ could not verify
the leader color $i$ (due to the listen-phase in Algorithm~\ref{algo:main}).

As we do also have to handle nodes that go to sleep asynchronously, we
require the nodes to enter the main loop of the algorithm if for
example a request is not answered within the time boundaries proven in
Section~\ref{sec:coloring-algo-runtime}.

Note that the runtime of the algorithm holds only for stable parts of
the network. As nodes that wake up may force other nodes to resign, we
cannot guarantee a runtime based only on the wake up time of the node
itself.
However, the runtime of Theorem~\ref{thm:coloring-runtime} holds for $v$
after the last node that can reach $v$ or one of $v$'s neighbors directly or
through a directed chain woke up. This holds as $v$ can only be forced
to stop the algorithm or resign from its color by a node that reaches
$v$ directly or through a directed chain. However, $v$ may expect
a delay for example in the request state only if a neighbor of $v$ is
forced to resign.

\subsection{Maximal Independent Set}
\label{sec:maxim-indep-set}

An algorithm for solving MIS can be deducted by simplifying our
coloring algorithm. As nodes can either be in the MIS or not, we do
only require two colors. Let 0 be to color that indicates that a node
is in the MIS and 1 that it is not. As all nodes in the MIS
are independent, we do not require the request state, and nodes in the
MIS do not need to serve requests. Also, once a node $v$ that is
executing the core-coloring algorithm receives a $M_C^0(w)$ message,
$v$ can instantly transition to the Colored$_1()$ algorithm.  After a
runtime of $\O((\Delta^2+\ell)\Gamma^{\amax + 4} \log n)$, each node
selected a color and thus either is in the MIS or not.

\section{Conclussion}
\label{sec:conclussion}

In this paper we have proven a bound on the interference in networks
with arbitrary transmission power assignments in wireless ad hoc
networks. We believe that this generic result will be of use in many
algorithms designed for such networks.  We have shown that local
broadcasting can be transfered to the general case of arbitrary
transmission powers with minor efforts due to this
result. Additionally, we considered variable transmission power, which
allows each node to change its transmission power in each time
slot. To highlight the applicability of our results on communication
in networks with arbitrary transmission power, we presented a
distributed node coloring algorithm that is fully adapted to
characteristics of directed communication networks such as
unidirectional communication links.  For future directions, we wonder
whether the dependence on the neighborhood learning algorithm is
required and whether the dependence on $\Gamma$ could be decreased.

\section*{Acknowledgments} This work was supported by
the German Research Foundation (DFG) within the Research Training
Group GRK~1194 "Self-organizing Sensor-Actuator Networks".

%
\bibliographystyle{abbrv}

{
\bibliography{../../bib/abbrv,../../bib/bib}
}
%
%
\appendix

\FloatBarrier
\section{Omitted Proofs}

\subsection{Proof of Theorem \ref{thm:communication-ok-arb}}
\label{sec:putt-piec-togeth}

We shall prove the theorem in three steps. First, we establish that
within the transmission radius of each node the transmission
probability is constant. Then, we consider the the probabilistic
interference that origins from the area close to the sender, and
finally we sum over the probabilistic interference from all nodes in
the network by exploiting that if not too many nodes transmit in any
part of the network the interference from further away parts are
negligible.
\begin{figure}[htb]
  \centering
  \includegraphics[trim=6.7cm 5.47cm 0cm 6.2cm, clip=true, width=0.45\textwidth]{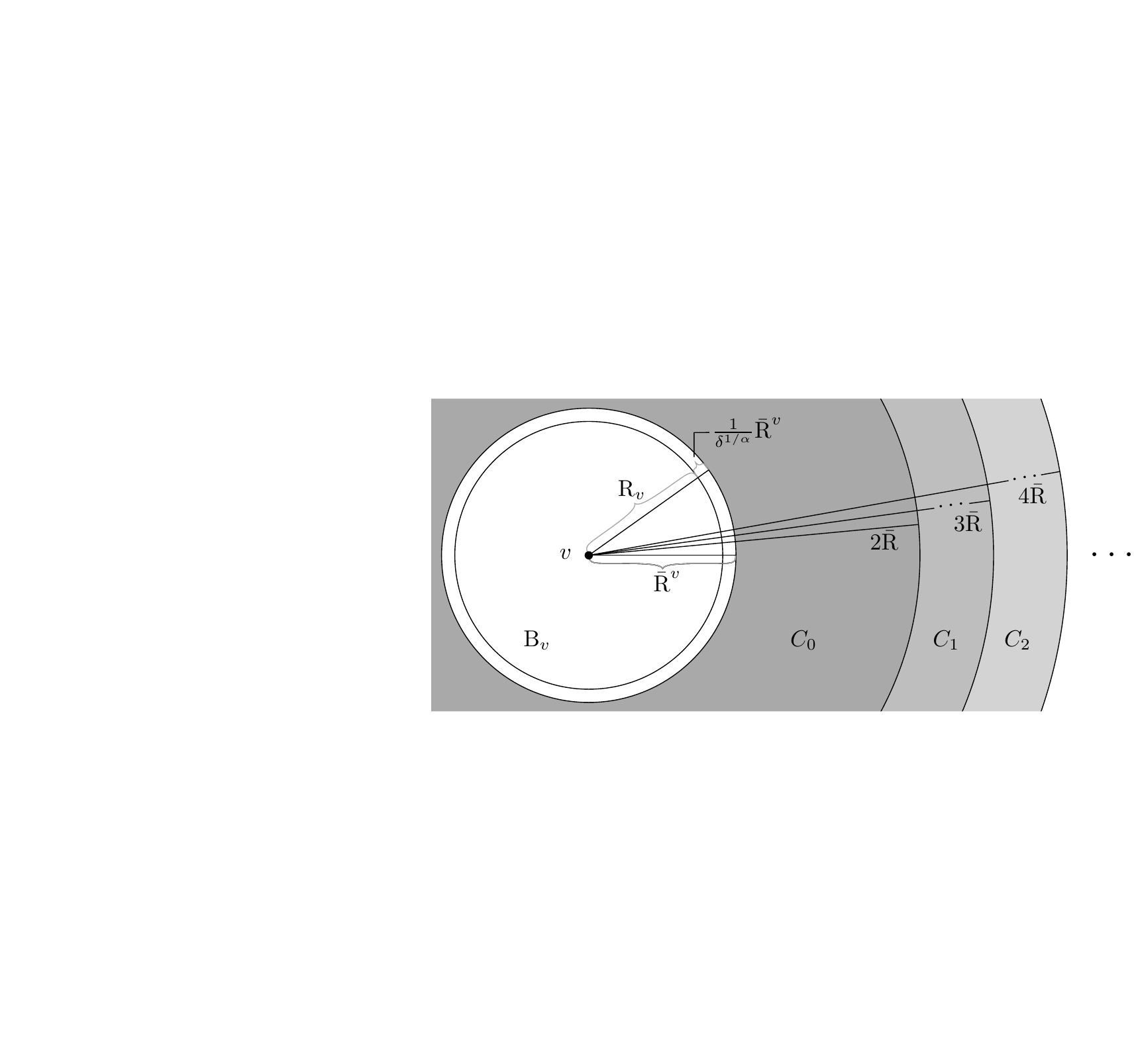}
  \caption{Sketch of proof}
  \label{fig:sketch-of-proof}
\end{figure}
We are now able to proof the result using standard techniques from
\cite{gmw-lbpim-08} combined with the results in Section~\ref{sec:bound-interf}.

\addtocounter{theorem}{-5}
\begin{theorem}
\label{thm:communication-ok-arb-2} 
Let the transmission probability of each node be $p = \gamma/\Delta$,
and $c>1$ an arbitrary constant.  A node $v$ that transmits with
probability $p$ for $8c/p \log n = \O(\Gamma^{2} \Delta \log n)$ time
slots successfully transmits to its neighbors whp.
\end{theorem}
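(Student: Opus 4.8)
The plan is to follow the standard three-step template from \cite{gmw-lbpim-08}, invoking the two results already established in Section~\ref{sec:bound-interf} as the nontrivial ingredients, and to show that in each time slot the broadcast of $v$ reaches every node of its broadcasting region with probability at least $p/8$. First I would observe that the choice $p=\gamma/\Delta$ makes the sum of transmission probabilities within any transmission region at most $\Delta\cdot p=\gamma$, so the hypotheses of both Lemma~\ref{lem:1} and Theorem~\ref{thm:sum-of-rings-bounded-interference} hold verbatim. Fix an arbitrary receiver $w$ in the broadcasting region of $v$ and consider a single time slot. I would then isolate three events: (i) $v$ transmits, which happens with probability exactly $p$; (ii) no node of the proximity region $3\Rmax(v)$ other than $v$ transmits, which by Lemma~\ref{lem:1} has probability at least $1/4$ (dropping the factor $(1-p_v)$ from the product there only increases it); and (iii) the aggregate interference at $w$ from all nodes outside the proximity region does not exceed $(\delta-1)\nmax$. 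Since these events concern disjoint sets of nodes, which transmit independently, their probabilities multiply.

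Next I would dispatch event (iii) by Markov's inequality: Theorem~\ref{thm:sum-of-rings-bounded-interference} bounds the \emph{expected} interference from outside the proximity region by $(\delta-1)\nmax/2$, so the probability that this interference exceeds $(\delta-1)\nmax$ is strictly less than $1/2$, i.e.\ (iii) holds with probability greater than $1/2$. Multiplying the three probabilities yields a per-slot success probability of at least $p\cdot\tfrac14\cdot\tfrac12=p/8$.

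It then remains to confirm that on the joint event the SINR condition at $w$ is actually met. When (ii) holds there is no interference from the proximity region, and when (iii) holds the remaining interference is at most $(\delta-1)\nmax$; adding the noise $\noise\le\nmax$, the denominator in the SINR expression is at most $\delta\nmax$. Because $\dist(v,w)\le\R_v=(\power_v/(\delta\nmax\bmax))^{1/\amin}$, the conservative definition of the broadcasting range via $\amin$ guarantees that the received signal $\power_v/\dist(v,w)^{\alpha}$ is at least $\delta\nmax\bmax$, so the ratio is at least $\bmax\ge\beta$ and $w$ decodes. Finally I would amplify over $8c/p\log n$ slots: the failure probability is at most $(1-p/8)^{8c/p\log n}\le e^{-c\log n}=n^{-c}$, and a union bound over the $n$ nodes (absorbed into the constant $c$) upgrades this to a simultaneous guarantee for every node, establishing the claim whp.

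The main obstacle I anticipate is the bookkeeping in the last two steps: carefully justifying the independence that lets the three probabilities multiply, and pinning down the exponent comparison $\dist(v,w)^{\alpha}\le\R_v^{\amin}$ (which relies on the distance normalization $\R_v\le 1$ together with $\alpha\ge\amin$) that lets the broadcasting range defined with $\amin$ certify the SINR for the unknown true $\alpha$. Everything else reduces to the routine concentration bound above.
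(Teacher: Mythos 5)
Your proposal is correct and follows essentially the same route as the paper's own proof: the per-slot success probability $p\cdot\frac14\cdot\frac12 = p/8$ obtained by combining the transmission probability, Lemma~\ref{lem:1}, and Markov's inequality applied to Theorem~\ref{thm:sum-of-rings-bounded-interference}, followed by the standard amplification $(1-p/8)^{8c/p\log n}\le e^{-c\log n}=n^{-c}$. Your additional care in explicitly verifying the SINR condition on the joint event (and in noting the $\amin$-versus-$\alpha$ exponent subtlety in the definition of $\R_v$) goes beyond what the paper spells out, but it is an elaboration of the same argument rather than a different approach.
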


\begin{proof}
  Let $v$ be the node that transmits for $8c/p \log n$ time slots with
  probability $p$. We prove the theorem by first showing that the
  probability of a successful local broadcast of $v$ in each round it
  transmits is substantial, followed by proving that at least one
  successful local broadcast of $v$ happens with high probability
  within $8c/p \log n$ time slots $v$

  It is stated in Theorem~\ref{thm:sum-of-rings-bounded-interference}
  that the probabilistic interference from all nodes not in
  $\Rmax^v$ is upper bounded by $(\delta-1)\nmax/2$. With the
  standard Markov inequality it follows that the probability that the
  interference from outside of the maximal transmission radius exceeds
  $(\delta-1)\nmax$ with probability less than $1/2$ and thus that
  the SINR condition holds with probability at least $1/2$. Combining
  both probabilities with the transmission probability of $v$ yields a
  lower bound on the success of a local broadcast by $v$ in each time
  slot.
  \begin{align*}
    P_\text{success} \geq \frac{p}{8}
  \end{align*}
  Using this probability we can bound the probability that $v$ fails
  in having a successful local broadcast within $8c/p \log n$ time
  slots. 
  \begin{align*}
    P_\text{fail} & \leq \left(1-\frac{p}{8}
    \right)^{8c/p \log n} = \left(1-\frac{c\log n}{8/p c\log n }
    \right)^{8c/p \log n}  \overset{(1)}{\leq} e^{-c\log n} = \frac{1}{n^c}, 
  \end{align*}
  where (1) follows from
  Fact~\ref{fact:1plustovernalltothepowerofn}. Hence within $8c/p \log
  n = O(\Gamma^{\amax+2} \Delta \log n)$ time slots, at least one of
  the transmissions of $v$ is successfully heard by all nodes in $B_v$
  with high probability.
\end{proof}

\subsection{Coloring}
\label{sec:app-coloring-proofs}

The following lemmata are required to ensure that the counters in
compete states are not reset for ever, but that some nodes will be
able to reach the counter limit and thus get colored.

\begin{lemma}
  \label{lem:coloring-counter-bounds}
  For the counter value $c_v$ of a node $v$ in the compete state it
  holds that $c_v \geq -\Delta\kappa_l$ if $v$ is in state
  Compete$_0$, and $c_v \geq -(38\Gamma)\kappa_s$ if $v$ is in state
  Compete$_i$ for $i > 0$.
\end{lemma}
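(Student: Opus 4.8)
The plan is to reduce the claim to a uniform lower bound on the quantity $\chi(P_v)$ to which the counter is reset, and then to obtain that bound by a counting argument on the forbidden intervals. First I would inspect the dynamics of $c_v$ in Algorithm~\ref{algo:compete}: after the initial assignment $c_v = \chi(P_v)$, the counter is only ever incremented by one per iteration of the while-loop, or reset to $\chi(P_v)$ when a colliding message $M_A^i(w,c_w)$ with $|c_v - c_w| \le \zeta_i$ is received. Hence the minimum value ever attained by $c_v$ inside the state equals the minimum of $\chi(P_v)$ over all resets, and it suffices to show $\chi(P_v) \ge -\Delta\kappa_l$ in Compete$_0$ and $\chi(P_v) \ge -(38\Gamma)\kappa_s$ in Compete$_i$, $i>0$, at every point in time.

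The core step is a counting bound on $\chi(P_v)$ that is independent of where the forbidden intervals actually lie. By definition, $\chi(P_v)$ is the largest integer $\le 0$ avoiding every interval $\{d_v(w)-\zeta_i,\dots,d_v(w)+\zeta_i\}$ for $w \in P_v$. Each such interval forbids at most $2\zeta_i+1$ integers, so across all $w \in P_v$ at most $|P_v|\,(2\zeta_i+1)$ integers are forbidden in total. Consequently, among the $|P_v|\,(2\zeta_i+1)+1$ integers $0,-1,\dots,-|P_v|(2\zeta_i+1)$ at least one is free, which yields $\chi(P_v) \ge -|P_v|\,(2\zeta_i+1)$. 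Note that this argument needs no control over how negative the individual positions $d_v(w)$ are, which sidesteps the apparent circularity that each $d_v(w)$ is itself an estimate of another node's (possibly negative) counter.

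It then remains to bound $|P_v|$ in each case and substitute $\zeta_i$. In state Compete$_0$ every node $w$ added to $P_v$ must have reached $v$ with an $M_A^0$-message and hence lies in $v$'s transmission region, so $|P_v| \le \Delta$; together with $\zeta_0 = \kappa_l$ this gives a bound of order $-\Delta\kappa_l$. In state Compete$_i$ with $i>0$ the nodes contributing to $P_v$ are exactly those simultaneously competing with $v$ for the same color, a quantity controlled by Lemma~\ref{lem:coloring-bounded-active-competing-nodes} (at most $\O(\Gamma^2)$ such nodes); with $\zeta_i = \kappa_s$ this gives a bound of order $-\Gamma^2\kappa_s$.

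The main obstacle I expect is precisely this last step: pinning down which nodes can persistently appear in $P_v$ with forbidden intervals reaching down to $0$, and doing so with the exact coefficients claimed. For Compete$_i$ everything rests on the active-competing-nodes lemma, so the real work is to verify its hypotheses hold throughout the compete state; reconciling the interval width $2\zeta_i+1$ obtained by the counting bound with the stated linear coefficients (the factor-two bookkeeping, and the $\Gamma$-versus-$\Gamma^2$ accounting in the Compete$_i$ bound) is then a matter of careful but routine constant-chasing.
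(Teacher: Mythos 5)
Your reduction of the lemma to a lower bound on $\chi(P_v)$ is sound, and your write-up is in fact more explicit than the paper itself, whose entire proof is a one-line citation of Lemma~5 in \cite{dt-dncsi-10}. The problem is that the bound your counting argument actually delivers does not match the one stated in the lemma, and the mismatch is not the ``routine constant-chasing'' you defer it to. Each forbidden interval $\{d_v(w)-\zeta_i,\dots,d_v(w)+\zeta_i\}$ contains $2\zeta_i+1$ integers, and since $d_v(w)$ is a copy of another node's counter, which may itself have been reset below zero, the entire interval can lie in the non-positive range. So your argument yields $\chi(P_v)\geq -|P_v|\,(2\zeta_i+1)$, i.e.\ roughly $-2\Delta\kappa_l$ in Compete$_0$ and roughly $-76\Gamma^2\kappa_s$ in Compete$_i$, whereas the lemma claims $-\Delta\kappa_l$ and $-(38\Gamma)\kappa_s$. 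The factor $2$ is inherent to two-sided intervals and cannot be removed by bookkeeping within your scheme; more importantly, $\Gamma$ is a network parameter, not a constant, so arriving at $\Gamma^2\kappa_s$ where the statement says $\Gamma\kappa_s$ is a shortfall by a factor of $\Gamma$. (The statement is plausibly a typo for $38\Gamma^2\kappa_s$, since $38\Gamma^2$ is the competitor bound used everywhere else in the paper, but a blind proof of the lemma as written cannot silently assume that.)

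There is a second, more local flaw: your justification that $|P_v|\leq\Delta$ in Compete$_0$ is backwards. Receiving $M_A^0(w,c_w)$ from $w$ certifies that $v$ lies in $w$'s broadcasting region, \emph{not} that $w$ lies in $v$'s transmission region; in the non-uniform power setting of this paper these are different statements, and the in-degree of a node is not in general bounded by $\Delta$, which counts nodes inside one's own transmission range. The conclusion can be rescued, but only by invoking the algorithm's participation rule: a node enters Compete$_0$ only if it is not dominated, so any uncolored competitor $w$ that reaches $v$ must also be reachable from $v$, hence lies within $v$'s transmission range and is counted by $\Delta$. A similar check is needed for Compete$_i$: Lemma~\ref{lem:coloring-bounded-active-competing-nodes} bounds the number of competitors at any given time, while $P_v$ accumulates every node heard during the state, so you must additionally argue, via the bounded duration of the compete state, that the accumulated set obeys the same $38\Gamma^2$ bound.
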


\begin{proof}
  The lemma follows directly from the argumentation for Lemma 5 in \cite{dt-dncsi-10}.
\end{proof}

\begin{lemma}
  \label{lem:coloring-compete-no-reset}
  For a node $v$ in the compete state it holds that once he
  successfully transmitted a message with its counter value $c_v$ to
  its neighbors, it cannot be reset anymore.
\end{lemma}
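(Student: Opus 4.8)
The plan is to identify the unique mechanism by which $v$'s counter can be reset and then show that, after a successful transmission, this mechanism can never fire again. Inspecting Algorithm~\ref{algo:compete}, the only statement that decreases $v$'s counter is the assignment $c_v \leftarrow \chi(P_v)$ inside the while-loop, which is triggered exactly when $v$ receives a message $M_A^i(w,c_w)$ from a competitor $w$ with $|c_v - c_w| \le \zeta_i$. So it suffices to prove that after $v$'s transmission is heard by all its neighbours, no such competitor value can ever reach $v$. First I would fix the round $t_0$ in which $v$'s transmission of $M_A^i(v,c_v)$ is received by all nodes in $B_v$. By the standing assumption of Section~\ref{sec:asynchr-node-wakeup} that two nodes simultaneously running the core-coloring algorithm share a bidirectional link, every competitor $w$ whose messages $v$ can receive also lies in $B_v$; hence each such $w$ receives this transmission, executes $P_w \leftarrow P_w \cup \{v\}$ and $d_w(v) \leftarrow c_v$, so that $v \in P_w$ from $t_0$ onward.

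The key step is to track $d_w(v)$ and compare it with $v$'s true counter. I would establish the invariant $d_w(v) = c_v$ for every round $t \ge t_0$ (up to a hypothetical reset of $v$): both values are incremented by one at the top of each while-loop iteration, and $d_w(v)$ is re-assigned only in a slot in which $v$ actually transmits, where it is set to $v$'s current $c_v$ and the equality is preserved. Crucially, a reset of $w$ alters only $c_w$ and never $d_w(v)$, so this synchronous-increment invariant holds irrespective of $w$'s own behaviour.

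Next I would invoke the definition of $\chi$. Whenever $w$ resets its counter it sets $c_w \leftarrow \chi(P_w)$, and since $\chi(P_w)$ is chosen to avoid the interval $\{d_w(u)-\zeta_i, \dots, d_w(u)+\zeta_i\}$ for \emph{every} $u \in P_w$ — in particular for $u=v$, which lies in $P_w$ after $t_0$ — the new value satisfies $|c_w - d_w(v)| > \zeta_i$, i.e. $|c_w - c_v| > \zeta_i$. As $c_w$ and $d_w(v)=c_v$ subsequently increment in lock-step, this gap persists in every later round. If instead $w$ never resets after $t_0$, then the reset test already failed at $t_0$, giving $|c_w - c_v| > \zeta_i$ there, and again the gap is preserved. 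Either way, every $M_A^i(w,c_w)$ that $v$ could receive after $t_0$ carries a value with $|c_v - c_w| > \zeta_i$, so $v$'s reset condition is never met.

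Finally I would close the argument by contradiction, which is the natural way to handle the one genuine subtlety here. I expect the main obstacle to be the apparent circularity: the gap bound $|c_w-c_v|>\zeta_i$ relies on the exact tracking $d_w(v)=c_v$, which itself presumes $v$ was not reset in the meantime. To break this, suppose $v$ is reset for the \emph{first} time at some round $t^\ast > t_0$. On $[t_0,t^\ast)$ the counter $c_v$ increments by exactly one per round, so $d_w(v)=c_v$ holds throughout for every competitor $w$, and the gap argument applies up to and including $t^\ast$; hence no competitor can trigger a reset at $t^\ast$, a contradiction. (Note also that $v$ does not receive in the transmitting slot $t_0$, so the successful transmission cannot itself coincide with a reset.) This minimality argument, together with the observation that $w$'s resets touch only $c_w$ and leave the invariant $d_w(v)=c_v$ intact, yields the claim, paralleling the corresponding analysis for the uniform-power case in \cite{dt-dncsi-10}.
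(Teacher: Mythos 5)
Your core lock-step argument is sound, and it is essentially the argumentation of Lemma~6 in \cite{dt-dncsi-10} that the paper itself merely cites (the paper's ``proof'' is a one-line reference, so your level of detail already exceeds it). However, there is a genuine gap: your universal claim that ``every $M_A^i(w,c_w)$ that $v$ could receive after $t_0$ carries a value with $|c_v-c_w|>\zeta_i$'' is established only for nodes that were already in the Compete$_i$ state (and hence processing $M_A^i$ messages) in round $t_0$. A node $u$ that enters Compete$_i$ \emph{after} $t_0$ --- because it just woke up, left the Wait state of Algorithm~\ref{algo:main}, or was assigned color $i$ by its leader via Request --- never received the $t_0$ transmission, so $v\notin P_u$ at the moment you need it, the invariant $d_u(v)=c_v$ is never established for $u$, and nothing in your argument prevents $u$ from setting $c_u=\chi(P_u)$ to a value within $\zeta_i$ of $c_v$, transmitting it, and resetting $v$. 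This case is not marginal in this paper: asynchronous node wake-up is a central feature of the model, and even with synchronous start the algorithm routinely has nodes entering Compete$_i$ at different times, so your fixed-competitor-set framing does not capture all nodes whose messages $v$ can receive.

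Closing the gap requires an additional, and inherently probabilistic, argument: any node entering Compete$_i$ must first listen for $\kappa_s$ slots (the initial for-loop of Algorithm~\ref{algo:compete}, on top of the listening phase of Algorithm~\ref{algo:main}), while $v$, still competing, transmits with probability $p_s$ in every slot of its while-loop; by Corollary~\ref{cor:coloring-message-transmission-times}, $v$'s counter reaches the newcomer w.h.p.\ before the newcomer first sets its counter via $\chi$, and only from that point on does your lock-step invariant apply to it. Note that this also means the lemma holds only w.h.p.\ (union-bounded over all such newcomers), not deterministically as your write-up suggests. A further minor point: for a competitor that is still in its initial listening for-loop at $t_0$, your ``the reset test already failed at $t_0$'' branch does not literally apply (such a node has no counter yet); that case is instead covered by the initial assignment $c_w=\chi(P_w)$ after the for-loop, which avoids $d_w(v)\pm\zeta_i$ because $v\in P_w$ at that time.
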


\begin{proof}
  The lemma follows directly from the argumentation for Lemma 6 in \cite{dt-dncsi-10}.
\end{proof}

\subsubsection{Bounding the number of compete states }
The following two lemmata are required to bound the number of consecutive
compete states.
\begin{lemma}
  \label{lem:coloring-bounded-leaders-in-range}
  Let $v$ be an arbitrary node. Then at a given time slot at most
  $19\Gamma^2$ leader nodes can be within a distance of $2\Rmax$ of
  $v$.
\end{lemma}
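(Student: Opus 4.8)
The plan is to mirror the packing argument used for Lemma~\ref{lem:coloring-leaders-bounded}, replacing the radius $\Rmax$ there by $2\Rmax$ here. The whole proof rests on a single geometric fact: any two distinct leader nodes are at Euclidean distance strictly greater than $\Rmin$ from one another. First I would re-establish this separation exactly as in Lemma~\ref{lem:coloring-leaders-bounded}: if two leaders $u$ and $w$ were within distance $\Rmin$, then each would lie inside the other's broadcasting region (since $\Rmin$ is the global minimum range), yielding a bidirectional communication link between them; but two nodes joined by a bidirectional link cannot both be leaders, since one of them would have requested a color from the other rather than competing for leader status. I would emphasize that this separation is a property of the \emph{pair} of leaders and uses no assumption about $v$, so it carries over unchanged to the present setting where $v$ is an arbitrary node rather than a leader.

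Given the separation, I would run the standard area comparison. Around each leader within distance $2\Rmax$ of $v$, place an open disc of radius $\Rmin/2$; by the separation these discs are pairwise disjoint, and since every such leader lies within $2\Rmax$ of $v$, each disc is contained in the disc of radius $2\Rmax + \Rmin/2$ centered at $v$. Comparing areas bounds the number of leaders by
\[
  \frac{(2\Rmax + \Rmin/2)^2}{(\Rmin/2)^2} = (4\Gamma + 1)^2 = 16\Gamma^2 + 8\Gamma + 1,
\]
and charging the lower-order terms $8\Gamma + 1$ against the leading $\Gamma^2$ term (as $\Gamma$ is bounded below by $1$) delivers a bound of the form $c\,\Gamma^2$, from which the claimed constant $19$ follows.

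The argument is essentially a transcription of Lemma~\ref{lem:coloring-leaders-bounded} with $2\Rmax$ in place of $\Rmax$, so I do not expect a genuine obstacle. The one point that deserves care is purely arithmetic: since the natural area estimate is $(4\Gamma+1)^2 \approx 16\Gamma^2$, the slack between $16\Gamma^2$ and the stated $19\Gamma^2$ must be sufficient to absorb the cross term $8\Gamma$ and the constant $1$, which is where the present lemma differs quantitatively from its predecessor. I would therefore make explicit the range of $\Gamma$ (or a slightly more generous packing bound) that validates the constant, and otherwise the geometric reasoning transfers verbatim.
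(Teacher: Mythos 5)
Your proof is exactly the paper's argument: the same $\Rmin$-separation between any two leaders (via the bidirectional-link contradiction, inherited unchanged from Lemma~\ref{lem:coloring-leaders-bounded}) followed by the same packing of disjoint $\Rmin/2$-discs inside the disc of radius $2\Rmax + \Rmin/2$ around $v$. Your arithmetic caveat is in fact well-founded: the area ratio is $(4\Gamma+1)^2 = 16\Gamma^2 + 8\Gamma + 1$, which exceeds $19\Gamma^2$ for $\Gamma$ near $1$ (it equals $25$ at $\Gamma=1$), a point the paper silently elides by simply asserting the ratio is at most $19\Gamma^2$.
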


\begin{proof}
  With the same argumentation as in the proof of
  Lemma~\ref{lem:coloring-sending-prob-ok}, it holds that discs with
  radius $\Rmin/2$ around the leader nodes do not intersect, and are
  fully contained in a disc of radius $2\Rmax + \Rmin/2$ around $v$.
  Thus at most $\frac{\area (2\Rmax+\Rmin/2)}{\area (\Rmin/2)} \leq
  19\Gamma^2$ nodes can be leaders within distance $2\Rmax$ around
  $v$.
\end{proof}

\begin{lemma}
  \label{lem:coloring-bounded-active-competing-nodes}
  Given a network with asynchronous wake-up of nodes.  Let $v$
  be an active node that tries to verify the assigned color $j$. Then
  there are at most $38\Gamma^2$ nodes $u_1,\dots,u_c$ that are active,
  capable of communicating with $v$, and that try to verify the same
  color $j$ as $v$.
\end{lemma}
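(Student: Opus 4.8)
The plan is to localize the competing nodes to a bounded region around $v$ and then combine a count of nearby leaders with a per-leader bound, rather than trying to argue directly that the competitors themselves are well separated. First I would observe that any node $u_i$ that can force $v$ off the color $j$ must reach $v$, so that $v$ hears the announcement $M_A^i(u_i,j)$ carrying the value $j$; hence $\dist(u_i,v) \le \R_{u_i} \le \Rmax$, placing every such competitor inside the disc of radius $\Rmax$ around $v$. Since $j$ is a non-leader color, the structure of Algorithm~\ref{algo:main} guarantees that each $u_i$ received $j$ from a leader $\ell_i$ with which it shares a \emph{bidirectional} link, because a request is sent only to some $w \in \text{In}(u_i)\cap\text{Out}(u_i)$. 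A bidirectional link forces $\dist(u_i,\ell_i) \le \Rmax$, so by the triangle inequality $\dist(\ell_i,v) \le \dist(\ell_i,u_i) + \dist(u_i,v) \le 2\Rmax$; thus every leader that assigned $j$ to a competitor lies within distance $2\Rmax$ of $v$.

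Given this localization, I would bound the number of distinct such leaders directly by Lemma~\ref{lem:coloring-bounded-leaders-in-range}, which gives at most $19\Gamma^2$ leaders within distance $2\Rmax$ of $v$. It then remains to bound, for one fixed color value $j$, how many competitors a single leader can contribute. Here I would invoke the color-assignment rule: a leader hands out pairwise disjoint contiguous color intervals (each of length $38\Gamma^2$, indexed by $\text{serveCount}\cdot 38\Gamma^2$) to its successive requesters, and a requester only ever verifies colors lying inside its own interval. Consequently $j$ falls into at most a bounded number of these intervals, so each leader contributes at most a small constant number of active competitors for the exact value $j$. Multiplying this per-leader bound by the $19\Gamma^2$ admissible leaders yields the claimed $38\Gamma^2 = 2\cdot 19\Gamma^2$.

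The main obstacle is precisely why the seemingly natural \emph{geometric} shortcut does not suffice, and hence why the detour through leaders is needed. One is tempted to argue that two active nodes verifying the same $j$ within distance $\Rmin$ share a bidirectional link, so the listen/verify phase forces one of them off $j$, giving pairwise $\Rmin$-separation and an immediate packing bound among the competitors themselves. This argument breaks under asynchronous wake-up: a freshly woken node need not yet have heard its neighbor, so two same-color competitors can momentarily be arbitrarily close, and no direct packing of the competitors is available. Routing through the leaders circumvents this, since the leader count of Lemma~\ref{lem:coloring-bounded-leaders-in-range} is a static geometric fact insensitive to timing. The remaining delicate point is therefore the per-leader count: accounting for boundary colors and for leaders that resigned and were replaced during asynchronous wake-up (cf. Lemma~\ref{lem:coloring-compete-no-reset}), I must show that at any one time at most two active requesters of any single leader simultaneously target the exact value $j$, which combined with the $19\Gamma^2$ leaders delivers the stated $38\Gamma^2$.
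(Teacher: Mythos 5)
Your localization steps (competitors lie within $\Rmax$ of $v$, their leaders within $2\Rmax$, hence at most $19\Gamma^2$ leaders by Lemma~\ref{lem:coloring-bounded-leaders-in-range}) are sound and reuse an ingredient of the paper's proof. The gap is in your per-leader bound, and it is not a deferrable technicality: it is the heart of the lemma. Your justification --- leaders hand out disjoint intervals of length $38\Gamma^2$ and ``a requester only ever verifies colors lying inside its own interval'' --- is circular. The claim that a node never strays more than $38\Gamma^2$ colors beyond its assignment is exactly Lemma~\ref{lemma:coloring-bounded-competition-states}, whose proof invokes the very lemma you are proving. The algorithm itself imposes no cap on this drift: a node in Compete$_i$ that hears $M_C^i(w)$ simply moves on to Compete$_{i+1}$, so without the present lemma a requester assigned some $j' < j$ can in principle drift upward through arbitrarily many intervals and arrive at $j$. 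Consequently a single leader could a priori contribute many active competitors for the exact value $j$ --- one from each earlier interval --- and no constant per-leader bound (your ``2'') follows from static reasoning alone. Note also that your leader count is not static over time: under asynchronous wake-up leaders can resign and be replaced, so the $19\Gamma^2$ bound, which Lemma~\ref{lem:coloring-bounded-leaders-in-range} guarantees only ``at a given time slot,'' does not by itself bound the number of distinct leaders that issued the assignments held by the competitors active now.

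The paper breaks precisely this circularity with a temporal, minimal-counterexample argument, and this is the idea missing from your plan: consider the \emph{first} time slot $t$ at which some node competes with more than $38\Gamma^2$ others for the same non-leader color $j$. Before $t$ the bound holds everywhere, so the argument of Lemma~\ref{lemma:coloring-bounded-competition-states} is legitimately available for all earlier times, and every node finishes within $38\Gamma^2$ compete states of length at most $T$ each; hence any competitor still active at $t$ received its color assignment within the last $38\Gamma^2 T$ slots. Inside that bounded window, the rate at which the at most $19\Gamma^2$ leaders near $v$ can issue assignments (a newly colored leader must listen before serving requests, and each leader's assignments strictly increase) caps the number of nodes that can have been driven to color $j$ by $38\Gamma^2$, contradicting the choice of $t$. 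Your static decomposition could be repaired, but only by embedding it in such an induction over time; as written, the step you yourself flag as ``the remaining delicate point'' cannot be established without assuming what is to be proven.
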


\begin{proof}
  Let us consider a time slot $t$ such that $t$ is the first time slot
  in which a node $v$ competes with more than $38\Gamma^2$ nodes for
  the same non-leader color $j$.  Let us denote the upper bound on the
  time it takes to compete for one color as $T$ for the sake of simplicity.  
  As $t$ is the first time slot, it holds that all nodes that received a color-assignment
  $38\Gamma^2 T$ time slots before $t$ or earlier must have finished competing for the
  considered color $j$.

  Due to Lemma~\ref{lem:coloring-bounded-leaders-in-range} at most
  $19\Gamma^2$ nodes can be leaders around $v$, and thus in a given
  period of at least $19\Gamma^2 T$ time slots at most $19\Gamma^2$
  nodes within distance $\Rmax$ of $v$ can get the same color assigned
  as $v$ (due to the listen period of $19\Gamma^2 T$ time slots before
  a new leader node answers requests).  Hence within the $38\Gamma^2
  T$ time slots before $t$, at most $38\Gamma^2$ nodes may compete for
  color $j$, contradicting the choice of $t$ and implying the lemma.
\end{proof}

\subsection{Useful facts}
\label{app:facts}

\begin{fact} (proven in
  \cite{js-pawp-02}) \label{fact:1over4probabilitiesless1over2} 

  \noindent 
  Given a set of probabilities $p_1,\dots,p_n$ with $\forall i : p_i
  \in [0,\frac{1}{2}]$, the following inequalities hold:
  \begin{align*} \left(\frac{1}{4}\right)^{\sum_{k=1}^n p_k} \leq
\prod_{k=1}^n (1-p_k) \leq \left(\frac{1}{e}\right)^{\sum_{k=1}^n p_k}
  \end{align*}
\end{fact}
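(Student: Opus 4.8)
The plan is to reduce both chains of inequalities to a single pointwise estimate on each factor $1 - p_k$ and then multiply over $k$. Since $p_k \le 1/2 < 1$, every factor $1 - p_k$ is strictly positive, so the product is well-defined and taking products preserves the direction of both inequalities; this lets me treat the two bounds independently and symmetrically.

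For the upper bound I would invoke the elementary inequality $1 + x \le e^{x}$, valid for all real $x$. Setting $x = -p_k$ gives $1 - p_k \le e^{-p_k}$ for each $k$, and multiplying over $k = 1, \ldots, n$ yields $\prod_{k} (1 - p_k) \le \prod_{k} e^{-p_k} = e^{-\sum_k p_k} = (1/e)^{\sum_k p_k}$. Note this half does not even require $p_k \le 1/2$; it holds for all nonnegative $p_k$, so the hypothesis is only genuinely needed for the other direction.

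For the lower bound the crux is the per-factor inequality $1 - p \ge 4^{-p}$ on $[0,1/2]$; granting this, multiplying gives $\prod_{k}(1-p_k) \ge \prod_{k} 4^{-p_k} = 4^{-\sum_k p_k} = (1/4)^{\sum_k p_k}$. To verify the scalar inequality I would set $f(p) = (1-p) - 4^{-p}$ and observe that $f(0) = 0$ and $f(1/2) = 1/2 - 4^{-1/2} = 0$, so $f$ vanishes at both endpoints. Differentiating twice gives $f''(p) = -(\ln 4)^2\, 4^{-p} < 0$, so $f$ is strictly concave on $[0,1/2]$; a concave function lies above the chord joining its endpoint values, and here that chord is identically zero, forcing $f(p) \ge 0$ throughout the interval, which is exactly $1 - p \ge 4^{-p}$.

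The only real obstacle is the scalar inequality $1 - p \ge 4^{-p}$, and in particular seeing why the base $4$ and the range $[0,1/2]$ are matched: the endpoints $p=0$ and $p=1/2$ are precisely where $1-p$ and $4^{-p}$ coincide, so $4$ is the largest base for which the bound survives on all of $[0,1/2]$, and this is why the hypothesis $p_k \le 1/2$ is exactly what is required. The concavity argument dispatches this without case analysis; alternatively one could write $4^{-p} = e^{-p\ln 4}$ and compare, but the two-endpoints-plus-concavity route is cleanest. Since the statement is attributed to \cite{js-pawp-02}, this self-contained argument simply reconstructs the standard reasoning.
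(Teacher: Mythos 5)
Your proof is correct, but note that the paper itself contains no argument for this fact at all: it is stated with a pointer to \cite{js-pawp-02} and the proof is deferred entirely to that reference, so your write-up supplies what the paper omits. You do so by the standard route: the upper bound is the pointwise estimate $1-p_k \le e^{-p_k}$ (from $1+x\le e^x$, valid for all real $x$), and the lower bound reduces to the scalar inequality $1-p \ge 4^{-p}$ on $[0,\frac{1}{2}]$, which you settle cleanly by checking equality at both endpoints $p=0$ and $p=\frac{1}{2}$ and invoking strict concavity of $f(p)=(1-p)-4^{-p}$ (indeed $f''(p)=-(\ln 4)^2\, 4^{-p}<0$, so $f$ lies above its zero chord). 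Both halves are airtight, and your observation that the hypothesis $p_k\le \frac{1}{2}$ is needed only for the lower bound is accurate. One quibble with a side remark, not with the proof: you claim $4$ is the \emph{largest} base for which the lower bound survives on $[0,\frac{1}{2}]$, but since $b^{-p}$ decreases as $b$ grows, the inequality $1-p\ge b^{-p}$ only gets easier for larger $b$; the correct statement is that $4$ is the \emph{smallest} admissible base, or equivalently that $\frac{1}{4}$ is the largest constant $c$ with $c^{p}\le 1-p$ throughout $[0,\frac{1}{2}]$, which is presumably what you intended. This slip is confined to the sharpness discussion and does not affect the validity of the argument.
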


\begin{fact} (for example in the mathematical background section of
  \cite{mp-ra-10}) \label{fact:1plustovernalltothepowerofn} 

  \noindent 
  For all $n$, $t$, such that $n \geq 1$ and $|t| \leq n$,
  \begin{equation*} e^{t} (1-\frac{t^2}{n}) \leq (1+\frac{t}{n})^{n}
\leq e^{t}.
  \end{equation*}
\end{fact}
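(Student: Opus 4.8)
The plan is to prove the two inequalities separately after the substitution $x := t/n$, which lies in $[-1,1]$ because $|t|\le n$; the statement then becomes $e^{nx}(1-nx^2) \le (1+x)^n \le e^{nx}$, and translating back at the end is trivial since $nx=t$ and $nx^2=t^2/n$.

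For the upper bound I would start from the elementary inequality $1+x \le e^{x}$, valid for every real $x$. Since $x \ge -1$, the base $1+x$ is nonnegative, so raising both sides to the $n$-th power preserves the inequality and gives $(1+x)^n \le e^{nx}$. The only edge case is $x=-1$ (i.e. $t=-n$), where the left-hand side is $0 \le e^{-n}$, so the bound still holds.

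For the lower bound I would first dispose of the regime $nx^2 \ge 1$ (equivalently $t^2 \ge n$): there $1-nx^2 \le 0$ while $(1+x)^n \ge 0$ because $x \ge -1$, so $e^{nx}(1-nx^2) \le 0 \le (1+x)^n$ holds for free. The substantive case is $nx^2 < 1$, where I rewrite the target as $\bigl((1+x)e^{-x}\bigr)^n \ge 1-nx^2$. The key step is the pointwise bound $(1+x)e^{-x} \ge 1-x^2$ on $[-1,1]$: dividing by $1+x > 0$ reduces it to $e^{-x} \ge 1-x$, which is once more $1+y \le e^{y}$ at $y=-x$ (with equality $0=0$ at $x=-1$). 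Since $1-x^2 \ge 0$ on $[-1,1]$, both sides are nonnegative and I may raise to the $n$-th power, obtaining $\bigl((1+x)e^{-x}\bigr)^n \ge (1-x^2)^n$; a single application of Bernoulli's inequality $(1-x^2)^n \ge 1-nx^2$ (valid since $-x^2 \ge -1$) closes the chain and yields $e^{t}(1-t^2/n) \le (1+t/n)^n$.

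I expect the only real obstacle to be the lower bound, and specifically recognizing that the awkward factor $1-t^2/n$ should be routed through the auxiliary function $(1+x)e^{-x}$ and then handled by Bernoulli; once that factorization is spotted, everything reduces to repeated use of $1+y \le e^{y}$, and the only case distinction needed is on the sign of $1-nx^2$.
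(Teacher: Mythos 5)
The paper itself gives no proof of this fact: it is imported from the literature (the mathematical background section of \cite{mp-ra-10}), so there is no internal argument to compare yours against, and your proof fills that gap. Your argument is correct and self-contained. The upper bound is the standard $1+y\le e^{y}$ raised to the $n$-th power, which is legitimate because $1+x\ge 0$ for $x\ge -1$ (and the edge case $x=-1$ is checked). The lower bound rests on the pointwise inequality $(1+x)e^{-x}\ge 1-x^2$ on $[-1,1]$, which after division by $1+x>0$ is again $1+y\le e^{y}$ (and reads $0\ge 0$ at $x=-1$); since both sides are nonnegative on $[-1,1]$ you may raise to the $n$-th power, and Bernoulli's inequality $(1-x^2)^n\ge 1-nx^2$ closes the chain. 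One simplification: your case split on the sign of $1-nx^2$ is superfluous, because Bernoulli's inequality $(1+u)^n\ge 1+nu$ holds for every $u\ge -1$ and $n\ge 1$ even when $1+nu<0$ (the left-hand side is nonnegative there), so the chain $\bigl((1+x)e^{-x}\bigr)^n\ge(1-x^2)^n\ge 1-nx^2$ already covers both regimes uniformly. All steps use only $n\ge 1$, so the argument would even extend to non-integer $n$ via the real-exponent form of Bernoulli's inequality.
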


\subsection{A Worst Case Network}
\label{sec:worst-case-network}

\begin{figure}[ht]
  \centering
  \includegraphics[width=4cm]{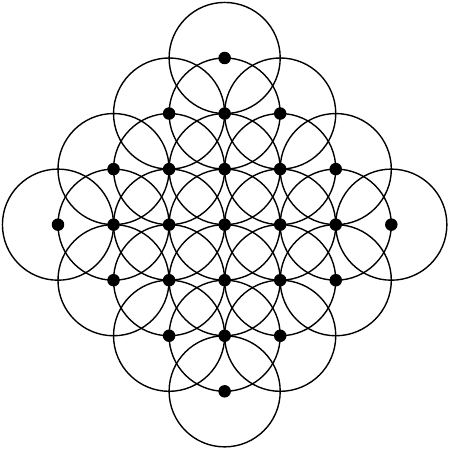}
  \caption{A network that requires $\Omega(n)$ time slots to allow each node one local broadcast if the broadcasting range equals the transmission range.}
  \label{fig:worst-case-omega-n}
\end{figure}

\end{document}